\newcommand{\uRule}{\ensuremath{\forall}-rule\xspace}
\newcommand{\eRule}{\ensuremath{\exists}-rule\xspace}
\newcommand{\aRule}{\ensuremath{\approx}-rule\xspace}
\newcommand{\eqRule}{\textit{Eq}-rule\xspace}
\newcommand{\egdRules}[1]{\ensuremath{#1^\approx}\xspace}
\newcommand{\eR}[1]{\egdRules{#1}}
\newcommand{\overchase}[1]{\ensuremath{\mathcal{V}_#1}\xspace}
\newcommand{\V}{\ensuremath{\mathcal{V}}\xspace}
\newcommand{\vx}{\ensuremath{\vec{x}}\xspace}
\newcommand{\vy}{\ensuremath{\vec{y}}\xspace}
\newcommand{\vz}{\ensuremath{\vec{z}}\xspace}
\newcommand{\vt}{\ensuremath{\vec{t}}\xspace}
\newcommand{\vs}{\ensuremath{\vec{s}}\xspace}
\newcommand{\formula}{\ensuremath{\phi}\xspace}
\newcommand{\fact}{\ensuremath{\phi}\xspace}
\newcommand{\query}{\ensuremath{\gamma}\xspace}
\newcommand{\auxRule}{\ensuremath{\upsilon}\xspace}
\newcommand{\EGD}{\ensuremath{\body(\vx) \to x \approx y}\xspace}
\newcommand{\body}{\ensuremath{\beta}\xspace}
\newcommand{\head}{\ensuremath{\eta}\xspace}
\newcommand{\skolemFunction}[2]{\ensuremath{f_{#2}^{#1}}}
\newcommand{\sF}[2]{\skolemFunction{#1}{#2}}
\newcommand{\sk}{\textit{sk}}
\newcommand{\Program}{\ensuremath{\mathcal{P}}\xspace}
\renewcommand{\P}{\Program}
\newcommand{\Rules}{\ensuremath{\mathcal{R}}\xspace}
\newcommand{\R}{\Rules}
\newcommand{\Instance}{\ensuremath{\mathcal{I}}\xspace}
\newcommand{\I}{\Instance}
\newcommand{\RI}{{\ensuremath{\langle \R,\I \rangle}}\xspace}
\newcommand{\PRI}{{\ensuremath{\P = \langle \R,\I \rangle}}\xspace}
\newcommand{\Facts}{\ensuremath{\mathcal{F}}\xspace}
\newcommand{\F}{\Facts}
\newcommand{\depth}{\textit{dep}\xspace}
\newcommand{\substitution}{\ensuremath{\sigma}\xspace}
\newcommand{\subs}{\substitution}
\newcommand{\strictAbitraryOrder}{\ensuremath{\leadsto}\xspace}
\newcommand{\sAO}{\strictAbitraryOrder}
\newcommand{\obliviousChaseStep}[2]{\ensuremath{#1_{O}^{#2}}\xspace}
\newcommand{\oCS}[2]{\obliviousChaseStep{#1}{#2}}
\newcommand{\obliviousChase}[1]{\ensuremath{\textit{OC}(#1)}}
\newcommand{\oChase}[1]{\obliviousChase{#1}}
\newcommand{\restrictedChaseStep}[2]{\ensuremath{#1_{R}^{#2}}\xspace}
\newcommand{\rCS}[2]{\restrictedChaseStep{#1}{#2}}
\newcommand{\restrictedChase}[1]{\ensuremath{\textit{RC}(#1)}}
\newcommand{\rChase}[1]{\restrictedChase{#1}}
\newcommand{\criticalInstance}[1]{\ensuremath{\mathcal{I}_\star(#1)}\xspace}
\newcommand{\cI}[1]{\criticalInstance{#1}}
\newcommand{\hsriq}{\textit{Horn-}\ensuremath{\mathcal{SRIQ}}\xspace}
\newcommand{\Nc}{\ensuremath{\textit{N}_\textit{C}}\xspace}
\newcommand{\Ni}{\ensuremath{\textit{N}_\textit{I}}\xspace}
\newcommand{\Nr}{\ensuremath{\textit{N}_\textit{R}}\xspace}
\newcommand{\Nri}{\ensuremath{\textit{N}^-_\textit{R}}\xspace}
\newcommand{\atMostOne}[2]{{\leq}\, 1\, #1 . #2}
\newcommand{\aMO}[2]{\atMostOne{#1}{#2}}
\newcommand{\ontology}{\ensuremath{\mathcal{O}}\xspace}
\renewcommand{\O}{\ontology}
\newcommand{\TBox}{\ensuremath{\mathcal{T}}\xspace}
\newcommand{\T}{\TBox}
\newcommand{\existsAxioms}[1]{\ensuremath{#1^\exists}\xspace}
\newcommand{\eA}[1]{\existsAxioms{#1}}
\newcommand{\univAxioms}[1]{\ensuremath{#1^\forall}\xspace}
\newcommand{\uA}[1]{\univAxioms{#1}}
\newcommand{\ABox}{\ensuremath{\mathcal{A}}\xspace}
\newcommand{\A}{\ABox}
\newcommand{\TA}{\ensuremath{\langle \TBox, \ABox \rangle}\xspace}
\newcommand{\OTA}{\ensuremath{\O = \TA}\xspace}
\newcommand{\equalitySet}{\ensuremath{\textsf{Eq}}\xspace}
\newcommand{\eP}{\textit{Eq}\xspace}
\newcommand{\Rt}{\ensuremath{\R_\T}\xspace}
\newcommand{\Po}{\ensuremath{\P(\O)}\xspace}
\newcommand{\Pta}{\ensuremath{\P(\TA)}\xspace}
\newcommand{\RCA}[1]{\ensuremath{\text{RCA}_{#1}}}
\newcommand{\RTerm}[2]{\ensuremath{\mathcal{U}(#1, #2)}\xspace}
\newcommand{\RT}[2]{\RTerm{#1}{#2}}
\newcommand{\MFA}{\text{MFA}\xspace}
\newcommand{\MFAE}{\ensuremath{\text{MFA}^\exists}\xspace}
\newcommand{\MFAF}{\ensuremath{\text{MFA}^\forall}\xspace}
\newcommand{\MFAC}{\ensuremath{\text{MFA}^\cup}\xspace}
\newcommand{\ExpTime}{\textsc{ExpTime}\xspace}
\newcommand{\ExpTimeC}{\textsc{ExpTime-}complete\xspace}
\newcommand{\Film}{\textit{Film}}
\newcommand{\isProducedBy}{\textit{isProdBy}}
\newcommand{\Producer}{\textit{Producer}}
\newcommand{\produces}{\textit{prod}}
\newcommand{\AI}{\textit{AI}}
\newcommand{\Naturals}{\ensuremath{\mathbb{N}}\xspace}
\begin{document}

\title{A Practical Acyclicity Notion for Query Answering over \hsriq{} Ontologies\protect\footnote[1]{This is a post-peer-review, pre-copyedit version of an article published at the 15th International Semantic Web Conference (ISWC 2016). The final authenticated version is available online at:  \protect\url{http://dx.doi.org/10.1007/978-3-319-46523-4_5}”.}}
 
\author{David Carral$^1$ \and Cristina Feier$^2$ \and Pascal Hitzler$^1$}
\institute{DaSe Lab, Wright State University, Dayton US \and Universit\"at Bremen, Bremen Germany}

\maketitle

\begin{abstract}
Conjunctive query answering over expressive Horn Description Logic ontologies is a relevant and challenging problem which, in some cases, can be addressed by application of the chase algorithm.
In this paper, we define a novel acyclicity notion which provides a sufficient condition for termination of the restricted chase over \hsriq TBoxes.
We show that this notion generalizes most of the existing acyclicity conditions (both theoretically and empirically). Furthermore, this new acyclicity notion gives rise to a very efficient reasoning procedure. We provide evidence for this by providing a materialization based reasoner for acyclic ontologies which outperforms other state-of-the-art systems.
\end{abstract}

\section{Introduction}
\label{section:introduction}

Conjunctive query (CQ) answering over expressive Description Logic (DL) ontologies is a key reasoning task which remains unsolved for many practical purposes.
Indeed, answering CQs over DL ontologies is quite intricate and often of high computational complexity \cite{Calvanese201412,DBLP:journals/jair/GlimmLHS08,DBLP:journals/corr/RudolphG14}.
Nevertheless, CQ answering over a major class of DLs, the so-called \emph{Horn DLs}, can in some cases be addressed via application of the \emph{chase algorithm}, a technique where all relevant consequences of an ontology are precomputed, allowing queries to be directly evaluated over the materialized set of facts.
However, the chase is not guaranteed to terminate for all ontologies, and checking whether it does is not a straightforward procedure.
It is thus an ongoing research endeavor to establish so-called \emph{acyclicity conditions}; i.e., sufficient conditions which ensure termination of the chase.

The main contribution of this paper is the definition of \emph{restricted chase acyclicity} (\RCA{n}), a novel acyclicity condition for \hsriq ontologies (the DL \hsriq may be informally described as the logic underpinning the deterministic fragment of OWL DL \cite{owl2-primer} minus nominals).
If an ontology is proven to be \RCA{n}, then $n$-cyclic terms do not occur during the computation of the chase of such ontology and thus the chase is guaranteed to terminate.

In contrast with existing acyclicity notions \cite{CG+13:acyclicity} which deal with termination of the unrestricted, i.e. oblivious, chase of arbitrary sets of existential rules, we restrict our attention  to the language \hsriq and seek to achieve termination of the restricted chase algorithm \cite{CaliGK08}; this is a special variant of the standard chase in which the inclusion of further terms to satisfy existential restrictions is avoided if such restrictions are already satisfied, and equality is dealt with via renaming.
By considering such a chase algorithm we are able to devise acyclicity conditions which are more general than any other of the notions previously described.

On the theoretical side, we show that \RCA{n} is more general than \emph{model-faithful acyclicity} (\MFA) provided $n$ is sufficiently large (linear in the size of ontology).
As shown in \cite{CG+13:acyclicity}, this is one of the most general acyclicity conditions for ontologies described to date, as it encompasses many other existing notions such as \emph{joint acyclicity} \cite{DBLP:conf/ijcai/KrotzschR11}, \emph{super-weak acyclicity} \cite{DBLP:conf/pods/Marnette09} or the hybrid acyclicity notions presented in \cite{DBLP:conf/ecai/BagetGMR14}.
Furthermore, we show that deciding \RCA{n} membership is not harder than deciding \MFA membership.

On the practical side, we empirically show that (i) \RCA{n} characterizes more real-world ontologies as acyclic than \MFA.
Furthermore, we demonstrate that (ii) the specific type of acyclicity captured by \RCA{n} results in a more efficient reasoning procedure.
This is because acyclicity is still preserved in the case when employing renaming techniques when reasoning in the presence of equality.
Thus, the use of cumbersome axiomatizations of equality such as \emph{singularization} \cite{DBLP:conf/pods/Marnette09} can be avoided.
Moreover, we report on an implementation of the restricted chase algorithm based on the datalog engine RDFOx \cite{DBLP:conf/semweb/NenovPMHWB15} and show that (iii) it vastly outperforms state-of-the-art DL reasoners.
To verify (i-iii), we complete an extensive evaluation with very encouraging results.

The rest of the paper is structured as follows: We start with some preliminaries in Section \ref{section:preliminaries}. Section \ref{section:reasoningChase} formally introduces the notions of oblivious and restricted chase, followed by an overview of \MFA in Section \ref{section:acyclicityNotions}. In Section \ref{section:RCA} we introduce our new acyclicity notion \RCA{n}. Finally, Section \ref{section:evaluation} and Section \ref{conclusions} describe the evaluation of our work and list our conclusions, respectively.

An extended technical report for this paper with all the proofs and further information concerning the evaluation can be found at \url{http://dase.cs.wright.edu/publications/acyclicity-notion-cqa-over-horn-sriq-ontologies}.
\section{Preliminaries}
\label{section:preliminaries}

\subsubsection{Rules}

We use the standard notions of \emph{constants}, \emph{function symbols} and \emph{predicates}, where $\approx$ is the equality predicate, $\top$ is universal truth, and $\bot$ is universal falsehood.
\emph{Variables}, \emph{terms}, \emph{atoms} and \emph{substitutions} are defined as usual.
A \emph{fact} is a ground atom; i.e., an atom without occurrences of variables.
As customary, every term $t$ is associated with some \emph{depth} $\depth(t) \geq 0$.
Furthermore, we often abbreviate a vector of terms $t_1, \ldots, t_n$  as $\vt$ and identify $\vt$ with the set $\{t_1, \ldots, t_n\}$.
In a similar manner, we often identify a conjunction of atoms $\fact_1 \wedge \ldots \wedge \fact_n$ with the set $\{\fact_1, \ldots, \fact_n\}$.
With $\formula(\vx)$ we stress that $\vx = x_1, \ldots, x_n$  are the free variables occurring in the formula \formula.

Let $t$ be some ground term and $c$ some constant.
Let $t_c$ be the term obtained from $t$ by replacing every occurrence of a constant by $c$, i.e., $f(d, g(e))_c = f(c, g(c))$.
The notation is analogously extended to facts and sets of facts.

A term $t'$ is a \emph{subterm} of another term $t$ if and only if $t' = t$, or $t = f(\vs)$ and $t'$ is a subterm of some $s \in \vs$; if additionally $t' \neq t$, then $t'$ is a \emph{proper subterm} of $t$.
A term $t$ is \emph{$n$-cyclic} if and only if there exists a sequence of terms of the form $f(\vec{s_1}), \ldots, f(\vec{s_{n+1}})$ such that  $f(\vec{s_{n+1}})$ is a subterm of $t$ and, for every $i = 1, \ldots, n$, $f(\vec{s_i})$ is a proper subterm of $f(\vec{s_{i+1}})$.
We simply refer to $1$-cyclic terms as \emph{cyclic}.

A \emph{rule} is a first-order logic (FOL) formula of one of the forms
\begin{align}
\forall \vx \forall \vz [\body(\vx, \vz) &\to \exists \vy \head(\vx, \vy)] \label{tgdRule}\qquad \text{or} \\
\forall \vx [\body(\vx) &\to x \approx y] \label{egdRule},
\end{align}
where \body and \head are non-empty conjunctions of atoms which do not contain occurrences of constants, function symbols nor of the predicate $\approx$; \vx, \vy and \vz are pairwise disjoint; and $x, y \in \vx$.
To simplify the notation, we frequently omit the universal quantifiers from rules.
As customary, we refer to rules of the forms (\ref{tgdRule}) and (\ref{egdRule}) as \emph{tuple generating dependencies} (\emph{TGDs}) and \emph{equality generating dependencies} (\emph{EGDs}), respectively.

Given a set of rules \R, we define \eA{\R} and \uA{\R} as the sets of all the TGDs in \R which do and do not contain existentially quantified variables, respectively.
Moreover, let \eR{\R} be the set of all EGDs in \R.
A \emph{program} is a tuple \RI where \R is a set of rules and \I is an \emph{instance}; i.e., a finite set of equality-free facts.

The main reasoning task we are investigating in this paper is CQ answering.
Nevertheless, for the rest of the paper, we restrict our attention to the simpler task of CQ entailment of \emph{boolean conjunctive queries} (BCQs).
This is without loss of generality since CQ answering can be reduced to checking entailment of BCQs.
A \emph{BCQ}, or simply a \emph{query}, is a formula of the form $\exists \vy \head(\vy)$ where \head is a conjunction of atoms not containing occurrences of constants, function symbols nor $\approx$.

For the remainder of the paper, we assume that $\top$ and $\bot$ are treated as ordinary unary predicates and that the semantics of $\top$ is captured explicitly in any program \PRI by including the rule $p(x_1, \ldots, x_n) \to \top(x_1) \wedge \ldots \wedge \top(x_n)$ in $\R$ for every predicate $p$ with arity $n$ occurring in \P.

We interpret programs under standard FOL semantics with true equality.
As usual, a program \P is \emph{satisfiable} if and only if $\P \not \models \exists y\bot(y)$.
Furthermore, given some query \query, we write $\P \models \query$ to indicate that \P \emph{entails} \query.

We will later employ skolemization to define the consequences of a TGD over a set of facts.
The \emph{skolemization} $\sk(\rho)$ of some TGD $\rho = \body(\vx, \vz) \to \exists \vy \head(\vx, \vy)$ is the rule $\body(\vx, \vz) \to \head(\vx, \vy)\subs_{\sk}$ where $\subs_{\sk}$ is a substitution mapping every $y \in \vy$ into $\sF{y}{\rho}(\vx)$ where $\sF{y}{\rho}$ is a fresh function unique for every variable $y$ and TGD $\rho$.

\subsubsection{Description Logics}

We next define the syntax and semantics of the ontology language \hsriq \cite{KRH:HornDLs2013}.
We assume basic familiarity with DL, and refer the reader to the literature for further details \cite{dlhandbook}.
Without loss of generality, we restrict our attention to ontologies in a normal form close to the one from \cite{KRH:HornDLs2013}.

A \emph{DL signature} is a tuple $\langle \Nc, \Nr, \Ni \rangle$ where $\Nc$, $\Nr$ and $\Ni$ are infinite countable and mutually disjoint sets of \emph{concept names}, \emph{role names} and \emph{individuals}, respectively, such that $\{\bot, \top\} \subseteq \Nc$.
A \emph{role} is an element of $\Nri = \Nr \cup \{R^- \mid R \in \Nr\}$.
\begin{figure*}[t]
\centering
\begin{align*}
A_1 \sqcap \ldots \sqcap A_n &\sqsubseteq B	&~~~\mapsto~~~ &A_1(x) \wedge \ldots \wedge A_n(x) \to B(x) \\
A &\sqsubseteq \forall R.B					&~~~\mapsto~~~ &A(x) \wedge R(x, y) \to B(y) \\
A &\sqsubseteq \aMO{R}{B}				&~~~\mapsto~~~ &A(x) \wedge R(x, y) \wedge B(y) \wedge R(x, z) \wedge B(z) \to y \approx z \\
A &\sqsubseteq \exists R.B				&~~~\mapsto~~~ &A(x) \to  \exists y [R(x, y) \wedge B(y)] \\
S &\sqsubseteq R						&~~~\mapsto~~~ &S(x, y) \to R(x, y) \\
S^- &\sqsubseteq R						&~~~\mapsto~~~ &S(y, x) \to R(x, y) \\
S \circ V &\sqsubseteq R					&~~~\mapsto~~~ &S(x, y) \wedge V(y, z) \to R(x, z)
\end{align*}
\caption{Mapping axioms $\alpha$ to rules $\Pi(\alpha)$, where $A_{(i)}, B \in \Nc$, $R, S, V \in \Nr$.}
\label{figure:Pi}
\end{figure*} 
A \emph{TBox axiom} is a formula of one of the forms given on the left hand side of the mappings in Figure \ref{figure:Pi}.
TBox axioms of the form $A \sqsubseteq \exists R.B$ are also referred as \emph{existential axioms}.
An \emph{ABox axiom} is a formula of the form $A(a)$ or $R(a, b)$ where $A \in \Nc$, $R \in \Nr$ and $a, b \in \Ni$.
An \emph{axiom} is either a TBox or an ABox axiom.
As usual, we simply refer to a set of TBox (resp. ABox) axioms as a \emph{TBox} (resp. an \emph{ABox}).


A \emph{\hsriq ontology} \O (or simply an \emph{ontology}) is some tuple \TA, where \T and \A are a TBox and an ABox, respectively, which satisfies the usual conditions \cite{DBLP:conf/kr/HorrocksKS06}.

Due to the close correspondence between ontologies and programs, we define the semantics of the former by means of a mapping into the latter.
Given some TBox \T, let $\Rt = \Pi(\T)$.
Given some ontology \OTA, let $\Po = (\Rt, \A)$ where $\Pi$ is the function from Figure \ref{figure:Pi}.
We say that \O is \emph{satisfiable} if and only if the program \Po is satisfiable.
Furthermore, \O \emph{entails} a query \query, written $\O \models \query$, if and only if \Po is unsatisfiable or $\Po$ entails $\query$.


\section{The Chase Algorithm}
\label{section:reasoningChase}

In this section we present two variants of the chase algorithm, which are somewhat similar to the oblivious and restricted chase from \cite{CaliGK08}, and elaborate about how such procedures may be used to solve CQ entailment over ontologies.

\begin{definition}
\label{definition:tgdConsequences}
A fact \fact is an \emph{oblivious consequence} of a TGD $\rho = \body(\vx, \vz) \to \exists \vy \head(\vx, \vy)$ on a set of facts \F if and only if there is some substitution \subs with $\body(\vx, \vz)\subs \subseteq \F$ and $\fact \in \sk(\head(\vx, \vy))\subs$ where $\sk(\head(\vx, \vy))$ is the head of the (skolemized) TGD $\sk(\rho)$.
A fact \fact is a \emph{restricted consequence} of $\rho$ on \F if and only if there is a substitution \subs with (1) $\body(\vx, \vz)\subs \subseteq \F$ and $\fact \in \sk(\head(\vx, \vy))\subs$, and (2) there is no substitution $\tau \supseteq \subs$ with $\head(\vx, \vy) \tau \subseteq \F$.

The result of \emph{obliviously applying} $\rho$ to \F, written $\rho_O(\F)$, is the set of all \emph{oblivious consequences} of $\rho$ on \F.
The result of \emph{obliviously applying} a set of TGDs \R to \F, written $\R_O(\F)$, is the set $\bigcup_{\rho \in \R} \rho_O(\F) \cup \F$.
The result of \emph{restrictively applying} $\rho$ to \F (resp., \R to \F), written $\rho_R(\T)$ (resp., $\R_R(\T)$), is analogously defined.
\end{definition}

\begin{definition}
\label{definition:egdConsequences}
Let $\sAO$ be some total strict order over the set of all terms such that $t \sAO u$ only if $\depth(t) \leq \depth(u)$.
Furthermore, we say that $t$ is greater than $u$ with respect to \sAO to indicate $t \sAO u$.

Given a set of EGDs \R and a set of facts \F, let $\mapsto^\R_\F$ be the minimal congruence relation over terms such that $t \mapsto^\R_\F u$ if and only if there exists some $\EGD \in \R$ and some substitution \subs with $\body(\vx)\sigma \subseteq \F$, $\sigma(x) = t$ and $\sigma(y) = u$.
Let $\R(\F)$ be the set that is obtained from \F by replacing all occurrences of every term $t$ by $u$ where $u$ is the greatest term with respect to \sAO such that $t \mapsto^\R_\F u$.
\end{definition}

Note that we define 
consequences with respect to sets of rules instead of simply (single) rules as it is customary \cite{CaliGK08}.
This allows us to define the chase as a deterministic procedure (modulo \sAO).
Also, unlike in \cite{CaliGK08}, where a lexicographic order is used to direct the replacement of terms, we employ a type of order which ensures that terms are always replaced by terms of equal or lesser depth.
This effectively precludes some ``deeper'' terms from being introduced during the computation of the chase.

\begin{definition}
\label{definition:chaseSequence}
Let \PRI be some program.
The \emph{oblivious chase sequence} of \P is the sequence $\F_0, \F_1, \ldots$ such that $\F_1 = \I$ and, for all $i \geq 1$, $\F_i$ is the set of facts defined as follows.
\begin{itemize}
\item If $\R^\approx(\F_{i-1}) \neq \F_{i-1}$, then $\F_i = \R^\approx(\F_{i-1})$.
\item If $\F_{i-1} = \R^\approx(\F_{i-1})$ and $\F_{i-1} \neq \uA{\R}_O(\F_{i-1}) $, then $\F_i = \uA{\R}_O(\F_{i-1})$.
\item Otherwise, $\F_{i} = \eA{\R}_O(\F_{i-1})$.
\end{itemize}
The \emph{restricted chase sequence} of \P is defined analogously.
\end{definition}

For the sake of brevity, we frequently denote the oblivious (resp., restricted) chase sequence of a program $\P$ with $\oCS{\P}{1}, \oCS{\P}{2}, \ldots$ (resp., $\rCS{\P}{1}, \rCS{\P}{2}, \ldots$)

\begin{definition}
\label{definition:chase}
Let \P be some program and let \R be some set of rules.
Then, the \emph{oblivious chase} of \P is the set $\oChase{\P} = \bigcup_{i \in \Naturals} \oCS{\P}{i}$.
The \emph{restricted chase} of \P, written $\rChase{\P}$, is defined analogously.

The oblivious (resp., restricted) chase of \P \emph{terminates} if and only if there is some $i$ such that, for all $j \geq i$, $\oCS{\P}{i} = \oCS{\P}{j}$.
Furthermore, the oblivious (resp., restricted) chase of a set of rules \R \emph{terminates} if the oblivious (resp., restricted) chase of every program of the form \RI terminates.
\end{definition}

Our definition of the chase sequence ensures that rules which do not contain existentially quantified variables are always applied with a higher priority than rules that do.
Note that, by postponing the application of rules with existential variables, we may prevent them from introducing further consequences.

The (restricted or oblivious) chase of a program can be employed to solve CQ entailment \cite{CaliGK08}.
I.e., a program \P entails a query \query, written $\P \models \query$, if and only if either $\oChase{\P} \models \exists y \bot(y)$ or $\oChase{\P} \models \query$ (resp., $\rChase{\P} \models \exists y \bot(y)$ or $\rChase{\P} \models \query$).
Thus, we may also use the chase to solve CQ entailment over ontologies: An ontology \O entails a query \query if and only if $\oChase{\Po} \models \exists y \bot(y)$ or $\oChase{\Po} \models \query$ (resp., $\rChase{\Po} \models \exists y \bot(y)$ or $\rChase{\Po} \models \query$).

For readability purposes, 
we say that the oblivious (resp. restricted) chase of some ontology \O \emph{terminates} if and only if the oblivious (resp. restricted) chase of \Po terminates.
The oblivious (resp. restricted) chase of some TBox \T \emph{terminates} if and only if if the oblivious (resp. restricted) chase of $\Rt$ terminates.

\begin{figure}[t]
\begin{align*}
\T = \{		& \Film \sqsubseteq \exists \isProducedBy.\Producer, \Producer \sqsubseteq \exists \produces.\Film, \\
			&\isProducedBy^- \sqsubseteq \produces, \produces^- \sqsubseteq \isProducedBy \}\\
\O = \langle	& \T, \{\Film(\AI)\}\rangle \\
\Rt = \{		& \rho = \Film(x) \to \exists y [\isProducedBy(x, y) \wedge \Producer(y)], \\
			& \auxRule = \Producer(x) \to \exists y [\produces(x, y) \wedge \Film(y)], \\
			& \isProducedBy(y, x) \to \produces(x, y), \produces(y, x) \to \isProducedBy(x, y) \}\\
\Po = \langle	& \Rt, \{\Film(\AI)\} \rangle \\
\Po_R^1 = \{	& \Film(\AI), \isProducedBy(\AI, \sF{y}{\rho}(\AI)), \Producer(\sF{y}{\rho}(\AI))\} \\
\Po_R^2 = \{	& \produces(\sF{y}{\rho}(\AI), \AI)\} \cup \Po_O^1 \\
\rChase{\Po} =\phantom{\{}	&\Po_O^2 \\
\oChase{\Po} =  \phantom{\{}	&\rChase{\Po} \cup \{\produces(\sF{y}{\rho}(\AI), \sF{y}{\auxRule}(\sF{y}{\rho}(\AI))), \Film(\sF{y}{\auxRule}(\sF{y}{\rho}(\AI))), \ldots\}
\end{align*}
\caption{Ontology $\O = \TA$, program \Po and the chase of \Po.}
\label{figure:example}
\end{figure}

As expected, the restricted chase has a better behavior than the oblivious chase; i.e., in some cases, the former might terminate when the latter does not: 
 
\begin{example}
\label{mainExample}
Let \O = \TA be as in Figure~\ref{figure:example}.
The figure depicts also the computation of the oblivious chase and that of the restricted chase of \Po.
In this case, $\rChase{\Po}$ terminates whereas $\oChase{\Po}$ does not. 
\end{example}

\section{Model Faithful Acyclicity}
\label{section:acyclicityNotions}

In this section we briefly describe Model Faithful Acyclicity (\MFA) \cite{CG+13:acyclicity}, one of the most general acyclicity conditions for sets of rules. 
\MFA guarantees the termination of the oblivious chase of a program by imposing that no cyclic term occurs in the chase. 
Note that, a condition such as \MFA can be applied to check whether a TBox \T is acyclic; i.e., \T is \MFA if and only if \Rt is \MFA.


When one is interested in checking the termination of the oblivious chase with respect to every possible instance, it is enough to check termination with respect to a special instance, the \emph{critical instance}  \cite{DBLP:conf/pods/Marnette09}.
The critical instance is the minimal set which contains all possible atoms that can be formed using the relational symbols which occur in TGDs and the special constant  $\star$.
Such a strategy is used by \MFA to guarantee termination of a set of rules.


While the actual definition of \MFA does not preclude the existence of EGDs, equality is assumed to be axiomatized, and thus it is treated as a regular predicate (EGDs are de facto TGDs).
To reflect such treatment we will use the special predicate $\eP$ to denote equality.
However, as the following example shows, the presence of equality in a set of TGDs frequently makes the \MFA membership test fail.

\begin{example}
\label{ex:mfaeq}
Let $\Sigma$ be the following set of rules and let $\Sigma'$ be the set of rules that result from axiomatizing the equality predicate as usual (see Section 2.1 of \cite{CG+13:acyclicity}).
Furthermore, let $\cI{\Sigma'}$ be the critical instance of $\Sigma'$.
\begin{align*}
\Sigma = \{&A(x) \wedge B(x) \to \exists y [R(x,y) \wedge B(y)], R(z, x_1) \wedge R(z,x_2) \to \eP(x_1, x_2)\} \\
\equalitySet = \{&\top(x) \to \eP(x, x), \eP(x, y) \to \eP(y, x), \eP( x, z) \wedge \eP(z, y) \to \eP(x, y)\} \\
\Sigma' = \{& A(x) \wedge \eP(x, y) \to A(y), R(x, y) \wedge \eP(x, z) \to R(z, y), \\
& R(x, y) \wedge \eP(y, z) \to R(x, z)\} \cup \Sigma \cup \equalitySet \\
\cI{\Sigma'} = \{&A(\star), R(\star, \star), \eP(\star,\star) \}
\end{align*}

The oblivious  chase of $(\Sigma', \cI{\Sigma'})$ does not terminate.
\begin{align*}
 \oCS{(\Sigma', \cI{\Sigma'} )}{1} = \{&R(\star, f(\star)), B(f(\star)), \eP(\star, f(\star)) \}  \cup \cI{\Sigma'}\\
\oCS{(\Sigma', \cI{\Sigma'} )}{2} = \{&A(f(\star)), R(f(\star), f(f(\star))), B(f(f(\star))), \ldots \} \\
\ldots\ldots\ldots\ldots&\ldots\ldots\ldots\ldots
\end{align*}
\end{example}

To avoid this situation, the use of \emph{singularization} \cite{DBLP:conf/pods/Marnette09}, a somewhat ``less-harmful'' axiomatization of equality, is proposed in \cite{CG+13:acyclicity}.

\begin{definition}
A \emph{singularization of a rule $\rho$} is the rule $\rho'$ that results from performing the following transformation for every variable $v$ in the body of $\rho$:
\begin{itemize}
\item Rename each occurrence of $v$ using different fresh variables $v_1, \ldots, v_n$,
\item pick some $j = 1, \ldots, n$ and add the atoms $\eP(v_1, v_j), \ldots, \eP(v_n, v_j)$ to the body of $\rho$ and
\item replace any occurrence of $v$ in the head of $\rho$ with $v_j$.
\end{itemize}

Let $\Sigma$ be a set of TGDs and let $\textsf{Eq}$ be the set from Example \ref{ex:mfaeq}.
A \emph{singularization} of $\Sigma$ is a set of TGDs $\Sigma'$ which contains $\textsf{Eq}$ and exactly one singularization of every $\rho \in \Sigma$.
Let $\textit{Sing}(\Sigma)$ be the set of all possible singularizations of $\Sigma$.
\end{definition}

\begin{example} 
Rule $A(x) \wedge B(x) \to \exists y [R(x,y) \wedge B(y)]$ from Example \ref{ex:mfaeq} admits two possible  singularizations: 
(i) $A(x_1) \wedge B(x_2) \wedge \eP(x_2, x_1) \to \exists y [R(x_1,y) \wedge B(y)]$ and (ii) $A(x_1) \wedge B(x_2) \wedge \eP(x_1, x_2) \to \exists y [R(x_2, y) \wedge B(y)]$.
\end{example}

Note that, for any $\Sigma' \in \textit{Sing}(\Sigma)$, if $\Sigma'$ is MFA, then the oblivious chase of $\Sigma'$ can be used to answer queries on $\Sigma$ \cite{CG+13:acyclicity}.
The use of singularization along with \MFA gives rise to the following acyclicity notions.

\begin{definition}
For a set of TGDs $\Sigma$, if there is some $\Sigma' \in \textit{Sing}(\Sigma)$ which is \MFA, then $\Sigma$ is said to be $\MFA^\exists$.
If every $\Sigma' \in \textit{Sing}(\Sigma)$ is \MFA, then $\Sigma$ is \MFAF.
\end{definition}

To some extent, the use of singularization solves the problems with equality: One can check that $\Sigma$ in Example \ref{ex:mfaeq} is \MFAE, but not \MFAF.
Nevertheless, due to the high number of possible singularizations, it is frequently not feasible to check \MFAE or \MFAF membership.
A simpler alternative is to check whether $\bigcup_{\Sigma' \in \textit{Sing}(\Sigma)} \Sigma'$ is $\MFA$.
If that is the case, then $\Sigma$ is said to be \MFAC.
Note that in the case of \hsriq TBoxes, $\vert \bigcup_{\Sigma' \in \textit{Sing}(\Sigma)} \Sigma' \vert$ is actually polynomial in $\vert \Sigma \vert$ and, as such, \MFAC is more feasible to check. 
Thus, we will use \MFAC as a baseline for the evaluation of the new acyclicity condition \RCA{n}, which is introduced in the next section.

\section{Restricted Chase Acyclicity}
\label{section:RCA}

While \MFA is quite a general acyclicity condition, it has two main drawbacks: 

\begin{enumerate}
\item It only considers the oblivious chase, which as we have seen in Example \ref{mainExample}, might not terminate (even though the restricted chase does!), and 
\item its treatment of equality via singularization is cumbersome and inefficient in practice.
Not only \MFAE and \MFAF are difficult to check, but even after a set of TGDs are established to belong to some $\MFA$ subclass, one has to employ a singularized program for reasoning purposes. 
\end{enumerate}

In this section, we present \RCA{n}, an acyclicity notion with neither of these drawbacks: \RCA{n} verifies termination of the restricted chase of a TBox and does not require the use of cumbersome axiomatizations of the equality predicate.
Furthermore, unlike \MFA, \RCA{n} allows for the presence of cyclic terms in the chase up to a given depth $n$. 

Since we are primarily interested in termination of the restricted chase of a \hsriq TBox, one might wonder why we do not simply check for termination of the restricted chase for such a TBox with respect to the critical instance, as it is done in the previous section with the oblivious chase.
Unfortunately, this is not possible: The restricted chase of any set of existential rules always terminates with respect to the critical instance.
Thus, we have to devise more sophisticated techniques to check the termination of the restricted chase.  
We start by introducing the notion of an overchase for a TBox.

\begin{definition}
\label{definition:overchase}
A set of facts \V is an \emph{overchase} for some TBox \T if and only if, for every \OTA, $\rChase{\Po}_\star \subseteq \V$.
\end{definition}

Given some TBox \T, an overchase for \T may be intuitively regarded as an over-approximation of the restricted chase of \T.

\begin{lemma}
\label{lemma:overchaseTermination}
If there exists a finite overchase for a TBox, then the restricted chase of such TBox terminates.
\end{lemma}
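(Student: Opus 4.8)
The plan is to turn the finite overchase into a bound on the depth of every term that can occur in the restricted chase of \Rt, to deduce from that bound that only finitely many facts are ever derived, and finally to argue that a restricted chase producing a finite set of facts must stabilise. I expect this last step to require the most care.

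First I would fix a finite overchase \V for \T and let $d$ be the largest depth of a term occurring in some fact of \V; this is well defined because \V is finite. For any ontology $\O = \langle \T, \A\rangle$, the definition of overchase gives $\rChase{\Po}_\star \subseteq \V$, so every term occurring in $\rChase{\Po}_\star$ has depth at most $d$; and since replacing every constant by the constant $\star$ does not change the depth of a term (an easy induction, using $\depth(\star) = 0$), every term occurring in $\rChase{\Po}$ has depth at most $d$ as well. Fixing now an arbitrary program $\langle \Rt, \I\rangle$ — we may assume $\I$ consists of facts over constants only, as is the case for the programs arising from ontologies, the general case being an easy adaptation, since atoms that do not match the output shape of $\Pi(\T)$ only ever trigger the $\top$-axiomatisation rules and thus yield depth-$0$ facts — the same reasoning shows that every term occurring in $\rChase{\langle\Rt,\I\rangle}$ has depth at most $d$.

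Next I would observe that no rule of \Rt contains a constant, that skolemisation introduces only the finitely many function symbols $\sF{y}{\rho}$, one for each existential axiom of \T, and that equality steps replace terms only by terms already present. Hence every term occurring in $\rChase{\langle\Rt,\I\rangle}$ is built from the finitely many constants of \I and these finitely many function symbols and has depth at most $d$; there are only finitely many such terms. As \Rt and \I involve finitely many predicates of bounded arity, $\rChase{\langle\Rt,\I\rangle} = \bigcup_i \rCS{\langle\Rt,\I\rangle}{i}$ is therefore a finite set of facts.

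It remains to show that the restricted chase sequence $\rCS{\langle\Rt,\I\rangle}{1}, \rCS{\langle\Rt,\I\rangle}{2}, \ldots$ stabilises. Since the sequence is determined by its first element once the order \sAO is fixed, and every $\rCS{\langle\Rt,\I\rangle}{i}$ is a subset of the finite set $\rChase{\langle\Rt,\I\rangle}$, it suffices to bound the number of steps that change the current fact set. By Definition~\ref{definition:chaseSequence} every such step is an equality step, a universal-TGD step, or an existential-TGD step; the latter two only enlarge the fact set, which is contained in $\rChase{\langle\Rt,\I\rangle}$, so only finitely many of them can occur between two consecutive equality steps. Finally, each equality step that changes the fact set replaces, throughout that set, at least one occurring term $t$ by the \sAO-greatest element $u \neq t$ of its congruence class, where necessarily $\depth(u) \leq \depth(t)$; I would then argue that such a term $t$ never reappears — constants are never created, and a Skolem term $\sF{y}{\rho}(\vec{s})$ is not re-derived once the existential restriction it witnesses is satisfied, which is precisely the situation once it has been merged away — so the number of fact-set-changing equality steps is bounded by the finite number of terms. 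Consequently the chase sequence performs only finitely many changing steps and stabilises, which is exactly termination. The delicate point, on which I would spend most of the effort, is this last claim that merged-away terms cannot reappear, i.e., that the restricted chase of a finite-depth program does not oscillate; the rest is a routine boundedness argument.
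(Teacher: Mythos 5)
Your reduction of the lemma to a no-oscillation claim is the right overall shape: the depth bound extracted from a finite overchase, the observation that depth is preserved by the $\star$-substitution, and the conclusion that $\rChase{\Po}$ lives inside a finite universe of facts are all fine (the aside about instances that are not ABoxes is a side issue the paper itself does not dwell on). The genuine gap is in the step you yourself flag as the crux, namely the claim that a term which has been merged away can never reappear, which you justify only by saying that a Skolem term $\sF{y}{\rho}(s)$ cannot be \emph{re-derived} by the \eRule once the existential restriction of its parent is satisfied. Even granting the unproven (but plausible) invariant that the witness for $s$ persists under later renamings, re-derivation by the \eRule is not the only channel through which a term can re-enter the fact set. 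The equality step of Definition~\ref{definition:egdConsequences} replaces a term $t$ by the \sAO-greatest member of its congruence class, and since $\mapsto^\R_\F$ is a \emph{congruence}, that class is closed under contexts and may contain terms that do not occur in the current fact set at all: merging $b$ into $a$ also sends $\sF{y}{\rho}(b)$ to $\sF{y}{\rho}(a)$ whenever the latter happens to be \sAO-greater, even if $\sF{y}{\rho}(a)$ had been merged away (or had never occurred) before. So the renaming operation itself can resurrect exactly the terms your counting argument assumes are gone for good.

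Because of this, the bound ``number of fact-set-changing equality steps $\leq$ number of terms'' is not established, and the scenario you set out to exclude---the chase oscillating forever among subsets of the finite fact universe---is not actually ruled out by what you wrote. To close the gap you would need either an argument that covers reintroduction via renaming (e.g.\ an invariant tying every occurrence of a Skolem term $\sF{y}{\rho}(s)$ to a persisting witness for the current representative of $s$, together with a well-founded measure that is compatible with both the monotone TGD steps and the non-monotone equality steps), or a sharper analysis of which representatives the \sAO-based replacement can actually produce for the rules arising from $\Pi(\T)$. As it stands, the last third of the proof is a restatement of what has to be shown rather than a proof of it.
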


Thus, to determine whether the chase of a TBox \T terminates, we introduce a procedure to compute an overchase for \T and a means to check its termination.
We proceed with some preliminary notions and notation.

\begin{definition}
\label{definition:restrictedProgram}
Let \T be some TBox and $t$ a term.
Let $\I(t)$ be the set of facts defined as follows: If $t$ is of the form $\sF{y}{\rho}(s)$ where $\rho = A(x) \to \exists y [R(x, y) \wedge B(y)]$, then $\I(t) = \{A(s), R(s, t), B(t)\} \cup \I(s)$; otherwise, $\I(t) = \emptyset$.
Furthermore, we introduce the program $\RT{\T}{t} = \langle \uA{\Rt} \cup \eR{\Rt}, \I(t) \rangle$.
\end{definition}

Intuitively, the restricted chase of the program \RT{\T}{t} can be regarded as some kind of under-approximation of the facts that must occur in the chase of every program of the form \Pta where $t$ occurs.
I.e., if $t$ occurs in the restricted chase sequence of any program \Pta, then the facts in the restricted chase of \RT{\T}{t} must also occur (up to renaming) in the chase sequence of such program.
Furthermore, due to the special priority of application of the rules during the computation of the chase, the facts in the restricted chase of \RT{\T}{t} must occur in the restricted chase sequence of every program of the form \Pta before any successors of $t$ are introduced.

\begin{example}
Let \O, $\rho$ and \auxRule be the ontology and rules from Example \ref{mainExample}.
Then, by Definition \ref{definition:restrictedProgram}:
\begin{align*}
\I(\sF{y}{\rho}(\AI)) = \{&\Film(\AI), \isProducedBy(\AI, \sF{y}{\rho}(\AI)), \Producer(\sF{y}{\rho}(\AI))\} \text{ and }\\
\rChase{\RT{\T}{\sF{y}{\rho}(\AI)}} = \{&\produces(\sF{y}{\rho}(\AI), \AI)\} \cup \I(\sF{y}{\rho}(\AI)).
\end{align*}

All the facts in the restricted chase of \RT{\T}{t} occur in the restricted chase sequence of \Po before any successors of term $\sF{y}{\rho}(\AI)$ are introduced.
This is because the rule $\isProducedBy(y, x) \to \produces(x, y)$ is applied with a higher priority than the rule $\auxRule = \Producer(x) \to \exists y [\produces(x, y) \wedge \Film(y)]$.
\end{example}

Given a TBox \T and some term of the form $\sF{y}{\rho}(t)$, we can in some cases conclude that such a term may never occur during the computation of the restricted chase of every program of the form \Pta by carefully inspecting the facts in the set \RT{\T}{t}.

\begin{definition}
\label{definition:restrictedTerm}
Let \T be a TBox and $t$ a term of the form $\sF{y}{\rho}(s)$ where $\rho = A(x) \to \exists y [R(x, y) \wedge B(y)]$.
We say that a term $t$ is \emph{restricted with respect to \T} if and only if there is some term $u$ with $\{R([s], u), B(u)\} \subseteq \rChase{\RT{\T}{s}}$ where $[s] = [v]$, if $s$ is replaced by $v$ during the computation of the restricted chase sequence; and $[s] = s$, otherwise.
\end{definition}

We often simply say that a term is ``restricted'', instead of ``restricted with respect to \T,'' if the TBox \T is clear from the context.

\begin{lemma}
\label{lemma:restrictedTerm}
Let \T be a TBox and $t$ a restricted term.
Then, for every possible \OTA, $t \notin \rChase{\Po}$.
\end{lemma}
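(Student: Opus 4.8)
The plan is to argue by contradiction: suppose $t = \sF{y}{\rho}(s)$ is restricted with respect to $\T$, yet $t \in \rChase{\Po}$ for some $\OTA$. Since $t$ is a Skolem term for the existential rule $\rho = A(x) \to \exists y[R(x,y) \wedge B(y)]$, the only way $t$ can be introduced into the restricted chase sequence is by a restricted application of (the skolemization of) $\rho$ at some step $i$, matching $x$ to $s$. By Definition \ref{definition:tgdConsequences}, this requires that $A(s) \in \rCS{\Po}{i-1}$ and that there is \emph{no} extension $\tau$ of the matching substitution with $R(s,\cdot)\tau, B(\cdot)\tau \subseteq \rCS{\Po}{i-1}$; i.e., no $u$ with $\{R(s,u), B(u)\} \subseteq \rCS{\Po}{i-1}$. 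The contradiction will come from showing that such a $u$ must in fact already be present.

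The first key step is to relate $\rCS{\Po}{i-1}$ to the restricted chase of the auxiliary program $\RT{\T}{s}$. Because $t$ is introduced at step $i$ and $s$ is its immediate subterm (the argument of the outermost function symbol), $s$ must itself have been introduced at some earlier step, and at step $i-1$ the chase is about to fire an existential rule on $s$ — which by the priority discipline in Definition \ref{definition:chaseSequence} means all $\aRule$ applications and all \uA{\Rt} applications triggered by the facts ``beneath'' $s$ have already been exhausted. I would make precise the informal claim stated right after Definition \ref{definition:restrictedProgram}: the facts of $\rChase{\RT{\T}{s}}$ embed (up to the renaming bookkeeping captured by $[\,\cdot\,]$) into $\rCS{\Po}{i-1}$, via a homomorphism that is the identity on the function-term skeleton of $s$ and sends $s$ to $[s]$. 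Concretely, $\I(s)$ (the ABox of $\RT{\T}{s}$) is a homomorphic image of facts forced to appear whenever $s$ appears, and since $\RT{\T}{s}$ contains exactly the non-existential rules $\uA{\Rt} \cup \eR{\Rt}$, every fact derivable in $\rChase{\RT{\T}{s}}$ is derivable from those same rules over $\rCS{\Po}{i-1}$, hence is already present there (again modulo $[\,\cdot\,]$, since EGDs may have collapsed terms).

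By the restrictedness hypothesis there is a term $u$ with $\{R([s],u), B(u)\} \subseteq \rChase{\RT{\T}{s}}$. Pushing this through the homomorphism of the previous step, we get corresponding facts $R([s]', u')$ and $B(u')$ in $\rCS{\Po}{i-1}$, where $[s]'$ is the current representative of $s$ in the $\Po$-chase. But then the substitution extending the match of $\rho$'s body by sending $y \mapsto u'$ witnesses that the restricted-consequence condition (2) fails at step $i$, so the existential rule cannot fire to create $t$ — contradicting $t \in \rChase{\Po}$. A small amount of care is needed to confirm that $[s]$ in Definition \ref{definition:restrictedTerm} and the representative of $s$ in the $\Po$-chase correspond under the homomorphism; this is exactly what the definition of $[s]$ via ``$s$ replaced by $v$ during the restricted chase'' is set up to guarantee.

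The main obstacle I anticipate is the rigorous version of the embedding claim — establishing that $\rChase{\RT{\T}{s}}$ maps homomorphically into $\rCS{\Po}{i-1}$ in a way that (a) respects the term-replacement equivalence induced by EGDs on both sides, and (b) uses the chase priority to guarantee that \emph{all} the relevant non-existential and equality consequences have already been drawn before step $i$ (so nothing from $\rChase{\RT{\T}{s}}$ is ``still pending''). This likely requires an induction on the structure of $s$ (to handle nested Skolem terms, where $\I(s)$ itself recursively includes $\I$ of the inner argument) together with an induction on the length of the restricted chase sequence of $\RT{\T}{s}$. Everything else — the contradiction from condition (2), the bookkeeping with $[\,\cdot\,]$ — is then routine.
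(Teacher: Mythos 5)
Your proposal is correct and follows essentially the same route as the paper's own (sketched) argument: assume $t=\sF{y}{\rho}(s)$ appears, use the rule-priority discipline to show that all facts of $\rChase{\RT{\T}{s}}$ (up to the $[\cdot]$ renaming) are already present before any successor of $s$ is created, and then conclude that the witness $u$ guaranteed by restrictedness blocks the restricted application of $\rho$, contradicting the introduction of $t$. The embedding claim you flag as the main obstacle is precisely the ``we can verify'' step the paper leaves implicit, so your write-up is, if anything, more explicit than the published sketch.
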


\begin{proof}
(Sketch)
Let $t$ be a term of the form $\sF{y}{\rho}(s)$ where $\rho = A(x) \to \exists y(R(x, y) \wedge B(y))$.
We can verify that, if $t$ occurs during the computation of the chase sequence, then every fact \rChase{\RT{\T}{s}} will also be included in such chase sequence before any new terms are introduced.
Thus, if $t$ is indeed restricted, there must be some $u$ with $R([s], u)$ and $B(u)$ occurring in the chase sequence.
Therefore, by the definition of the chase, the term $t$ may never be derived.
\end{proof}

\begin{example}
Let \T, $\rho$ and \auxRule be the TBox and rules from Example \ref{mainExample}.
We proceed to show that the term $\sF{y}{\rho}(\sF{y}{\auxRule}(\AI))$ is restricted.
First, we compute the restricted chase of $\RT{\T}{\sF{y}{\auxRule}(\AI)}$.
\begin{align*}
\rChase{\RT{\T}{\sF{y}{\auxRule}(\AI)}} = \{&\Producer(\AI), \produces(\AI, \sF{y}{\auxRule}(\AI)), \\
&\Film(\sF{y}{\auxRule}(\AI)), \isProducedBy(\sF{y}{\auxRule}(\AI), \AI)\}
\end{align*}
Note that $\{\isProducedBy(\sF{y}{\auxRule}(\AI), \AI), \Producer(\AI)\} \subseteq \rChase{\RT{\T}{\sF{y}{\auxRule}(\AI)}}$.
Thus, $\sF{y}{\rho}(\sF{y}{\auxRule}(\AI))$ is restricted with respect to \T and, by Lemma \ref{lemma:restrictedTerm}, it may not occur in the restricted chase of a program of the form \Pta.
Furthermore, by Definition \ref{definition:restrictedTerm}, if $\sF{y}{\rho}(\sF{y}{\auxRule}(\AI))$ is restricted, then every term of the form $\sF{y}{\rho}(\sF{y}{\auxRule}(c))$, where $c$ is a constant, is also restricted.
\end{example}

With Definition \ref{definition:restrictedTerm} and Lemma \ref{lemma:restrictedTerm} in place, we proceed with the definition of a procedure to construct an overchase for some given TBox \T.

\begin{definition}
\label{definition:VOverchase}
Let \T be a TBox.
We define \overchase{\T} as the set initially containing every fact in $\cI{\Rt}$ which is then expanded by repeatedly applying the rules in Figure \ref{figure:overchaseRules} (in non-deterministic order).
\end{definition}

\begin{figure*}[t]
\centering
\begin{tabu} to \linewidth {X[1,l] X[0.8,l] X[9,l]}
\uRule	&if		& there is some TGD of the form  $\rho = \body(\vx, \vy) \to \head(\vx) \in \Rt$ \\
		&then	& $\overchase{\T} \rightarrow \rho_R(\overchase{\T}) \cup \overchase{\T}$ \\ \hline
\eRule	&if		& there is some TGD of the form  $\rho = A(x) \to \exists y [R(x, y) \wedge B(y)] \in\Rt$ and there exists some substitution \subs such that (i) $A(x) \subs \subseteq \overchase{\T}$ and  (ii) $\sF{y}{\rho}(x)\subs$ is not restricted with respect to \T \\		
		&then	& $\overchase{\T} \rightarrow \{R(x, \sF{y}{\rho}(x)), B(\sF{y}{\rho}(x))\}\subs \cup \overchase{\T}$ \\ \hline
\aRule	&if		& there is some EGD $\body(\vx, \vy) \to  x \approx y \in \Rt$ and there exists some substitution \subs such that $\body(\vx, \vy) \subs \subseteq \overchase{\T}$ \\
		&then	& $\overchase{\T} \rightarrow \{\eP(x, y), \eP(y, x)\}\subs \cup \overchase{\T}$ \\ \hline
\eqRule	&if		& there are some terms $t$, $u$ and $u_i$ where $i = 1, \ldots, n$ and some predicate $p$ such that (i) $p \neq \eP$, (ii) $\{\eP(t, u), p(u_1, \ldots, u_n)\} \subseteq \overchase{\T}$, (iii) $\depth(t) \leq \depth(u)$ and (iv) $u = u_j$ for some $j = 1, \ldots, n$\\
		&then	& $\overchase{\T} \rightarrow \{p(u_1, \ldots, u_n)\}[u/t] \cup \overchase{\T}$
\end{tabu}
\caption{Expansion rules for the construction of \overchase{\T}.}
\label{figure:overchaseRules}
\end{figure*}

\begin{lemma}
\label{lemma:VOverchase}
The set \overchase{\T} is an overchase of the TBox \T.
\end{lemma}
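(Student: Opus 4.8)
The plan is to fix an arbitrary \hsriq ontology $\O = \TA$ and show by induction on $i$ that $(\rCS{\Po}{i})_\star \subseteq \overchase{\T}$ for all $i \ge 1$; since $\rChase{\Po} = \bigcup_i \rCS{\Po}{i}$ and $(\cdot)_\star$ distributes over unions, this yields $\rChase{\Po}_\star \subseteq \overchase{\T}$, which is exactly what Definition~\ref{definition:overchase} demands. Two preliminary remarks streamline the argument. First, all four expansion rules of Figure~\ref{figure:overchaseRules} are monotone and their side conditions (in particular ``$\sF{y}{\rho}(x)\subs$ is not restricted with respect to \T'') depend only on already-derived facts and on \T; hence \overchase{\T} is the unique least fixpoint of these rules above $\cI{\Rt}$, and in particular is closed under all of them. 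Second, since every Skolem function introduced by $\Pi$ is unary, every ground term occurring in $\rChase{\Po}$ carries a single constant at its leaf, so $(\cdot)_\star$ merely renames that leaf to $\star$; consequently it preserves term depth, does not identify Skolem terms of different shapes, and $\sF{y}{\rho}(s)$ is restricted with respect to \T iff $\sF{y}{\rho}(s_\star)$ is (restrictedness is read off from $\RT{\T}{s}$, hence depends only on the shape of $s$).

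The base case is immediate: $\rCS{\Po}{1} = \A$ consists of facts $A(a)$ and $R(a,b)$, whose $(\cdot)_\star$-images are atoms over \star only and thus lie in $\cI{\Rt} \subseteq \overchase{\T}$. For the inductive step I would distinguish the three clauses of Definition~\ref{definition:chaseSequence}. If $\rCS{\Po}{i+1}$ is obtained by applying a TGD $\rho = \body(\vx,\vy)\to\head(\vx)\in\Rt$ without existentials along a substitution \subs, then $\body(\vx,\vy)\subs\subseteq\rCS{\Po}{i}$, so by the induction hypothesis the body with each variable $v$ mapped to $(\subs(v))_\star$ lies in \overchase{\T}; the \uRule then adds $(\head(\vx)\subs)_\star$, covering all new facts. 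If $\rCS{\Po}{i+1}$ is obtained by applying an existential axiom $\rho = A(x)\to\exists y[R(x,y)\wedge B(y)]$ with $\subs(x)=t$, then $A(t)\in\rCS{\Po}{i}$ and the fresh term $\sF{y}{\rho}(t)$ occurs in $\rChase{\Po}$; by Lemma~\ref{lemma:restrictedTerm} it is not restricted with respect to \T, hence (second remark) neither is $\sF{y}{\rho}(t_\star)$; together with $A(t_\star)\in\overchase{\T}$ from the induction hypothesis, the \eRule adds precisely $R(t_\star,\sF{y}{\rho}(t_\star))$ and $B(\sF{y}{\rho}(t_\star))$, i.e.\ the $(\cdot)_\star$-images of the new facts.

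The remaining case, an equality step in which every term $u$ is simultaneously replaced by the \sAO-greatest $t$ with $u\mapsto^{\Rt}_{\rCS{\Po}{i}}t$ (Definition~\ref{definition:egdConsequences}), is the crux, and I would handle it in two layers. Every \emph{generating} pair of the congruence stems from an EGD of $\Rt$ matched inside $\rCS{\Po}{i}$; pushing that match through $(\cdot)_\star$ and the induction hypothesis places its body in \overchase{\T}, so the \aRule yields the corresponding \eP-atoms in both orientations. Since \sAO refines depth and $(\cdot)_\star$ preserves depth, the representative $t$ is depth-minimal in its class, so whenever $\eP(t_\star,u_\star)\in\overchase{\T}$ the \eqRule may rewrite an occurrence of $u_\star$ to $t_\star$ in any non-\eP atom (its condition (iii) being met); iterating these rewrites over all facts of $(\rCS{\Po}{i})_\star$ produces their replaced forms inside \overchase{\T}. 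The step I expect to be the main obstacle is that the congruence is the \emph{transitive} closure of the generating pairs, whereas the \eqRule is forbidden from rewriting \eP-atoms, so one cannot simply chain \eP-atoms to reach the \emph{global} representative of a class. The way around this is to observe that rewriting an ordinary atom via the \eqRule can itself create a fresh EGD body match, so that exhaustive, interleaved application of the \aRule and the \eqRule eventually generates every \eP-atom needed; making this precise — and verifying that the depth bounds of the \eqRule line up along the entire rewrite chain — is the technical heart of the proof. Once it is in place, the three cases complete the induction, and hence the lemma.
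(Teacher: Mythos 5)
Your overall route is the one the paper itself takes: fix an arbitrary $\O=\TA$, induct over the restricted chase sequence, and show that the $(\cdot)_\star$-image of every fact derived at step $i$ is reproduced inside \overchase{\T} by the expansion rules of Figure~\ref{figure:overchaseRules}. Your base case and both TGD cases are handled correctly and in more detail than the paper's sketch; in particular, the contrapositive use of Lemma~\ref{lemma:restrictedTerm} together with the observation that restrictedness of $\sF{y}{\rho}(s)$ depends only on the functional shape of $s$ (since \RT{\T}{s} is built from that shape, all Skolem functions being unary) is exactly the right way to discharge the side condition of the \eRule.

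However, as you yourself say, the equality case is not proved, and that is a genuine gap rather than a routine verification: it is precisely the place where the depth restriction of the \eqRule and the ban on rewriting \eP-atoms could, in principle, prevent \overchase{\T} from simulating $\R^\approx$. What is missing is an argument of the following kind: for every term $v$ occurring in $\rCS{\Po}{i}$ with $r$ the \sAO-greatest (hence depth-minimal) element of its $\mapsto$-class, the atom $\eP(r_\star,v_\star)$ is itself derivable in \overchase{\T}. Your worry is justified, since chaining generating pairs $t\sim u\sim r$ through a deeper intermediate $u$ is blocked by condition (iii) of the \eqRule; the repair you gesture at is the right one, but it has to be carried out, e.g.\ by induction on the set of generating pairs: the matched EGD bodies are non-\eP atoms whose star-images lie in \overchase{\T} by the outer induction hypothesis, so they can be rewritten argument-by-argument down to $r_\star$ (every such step is depth-admissible because $\depth(r)\leq\depth(w)$ for all $w$ in the class), after which the \aRule fires again on the rewritten bodies and emits \eP-atoms linking each class member's image directly to $r_\star$, never through a deeper term. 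Two further details should be checked while doing this: that interleaving these rewrites does not require an \eP-atom to be rewritten (it does not, since only the non-\eP body atoms need updating), and that the replacement performed by $\R^\approx$ on occurrences of a term nested inside larger terms is also matched, or argued irrelevant, on the \overchase{\T} side, where the \eqRule is triggered only by top-level argument occurrences. Until this case is written out, the proposal does not yet establish the lemma; once it is, it coincides with the paper's intended proof.
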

\begin{proof}
(Sketch)
The lemma can be proven via induction on chase sequence of any ontology of the form \OTA.
Note that, $\rCS{\O}{0} \subseteq \overchase{\T}$ by the definition of \overchase{\T}.
It can be verified that, for every possible derivation of a set of facts during the computation of the chase of \O, such facts will always be contained in \overchase{\T}.
\end{proof}

\begin{corollary}
\label{corollary1}
The restricted chase of some TBox \T terminates if \overchase{\T} is finite.
\end{corollary}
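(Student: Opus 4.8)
The plan is to obtain Corollary \ref{corollary1} as an immediate consequence of the two preceding lemmas, with no new argument required. First I would appeal to Lemma \ref{lemma:VOverchase}, which guarantees that the set \overchase{\T} constructed in Definition \ref{definition:VOverchase} is an overchase of \T in the sense of Definition \ref{definition:overchase}; that is, $\rChase{\Po}_\star \subseteq \overchase{\T}$ for every ontology \OTA. Under the hypothesis of the corollary, \overchase{\T} is finite, so \T admits a \emph{finite} overchase.

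Second, I would invoke Lemma \ref{lemma:overchaseTermination}, whose statement is exactly that the existence of a finite overchase for a TBox forces termination of the restricted chase of that TBox. Composing the two steps gives the conclusion. The only point I would make explicit is a matter of convention: by the remarks following Definition \ref{definition:chase}, ``the restricted chase of \T terminates'' abbreviates termination of the restricted chase of $\Rt$, i.e., of every program of the form \Pta, which is precisely the universally quantified statement that Lemma \ref{lemma:overchaseTermination} delivers, so no extra case distinction over instances is needed.

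I do not expect any obstacle here: the corollary is pure bookkeeping, and all the real work --- the soundness of the construction in Lemma \ref{lemma:VOverchase} and the finiteness-implies-termination argument in Lemma \ref{lemma:overchaseTermination} (which internally must account for the $\cdot_\star$ renaming and for the fact that only finitely many constants and function symbols can appear in the chase) --- has already been carried out upstream. If anything deserves a sentence of care, it is checking that the $\cdot_\star$ in Definition \ref{definition:overchase} does not hide an infinite chase behind a finite image; but this is subsumed by Lemma \ref{lemma:overchaseTermination}, which we are entitled to use as stated.
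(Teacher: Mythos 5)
Your proposal is correct and matches the paper's route exactly: the corollary is stated as the immediate composition of Lemma \ref{lemma:VOverchase} (\overchase{\T} is an overchase of \T) with Lemma \ref{lemma:overchaseTermination} (a finite overchase implies termination of the restricted chase), with no further argument needed.
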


\begin{example}
Let \T be the TBox from Example \ref{mainExample}.
Then \overchase{\T} is as follows.
\begin{align*}
\overchase{\T} = \{	&\Film(\star), \isProducedBy(\star), \Producer(\star), \produces(\star, \star), \\
				&\isProducedBy(\star, \sF{y}{\rho}(\star)), \Producer(\sF{y}{\rho}(\star)), \produces(\star, \sF{y}{\auxRule}(\star)), \Producer(\sF{y}{\auxRule}(\star))\}
\end{align*}

Note that terms $\sF{y}{\rho}(\sF{y}{\auxRule}(\star))$ and $\sF{y}{\auxRule}(\sF{y}{\rho}(\star))$ are restricted and thus, they are not included in \overchase{\T}.
Since \overchase{\T} is finite, we can conclude termination of the restricted chase of the TBox \T.
\end{example}

In the previous example, we were able to ascertain termination of the restricted chase of \T after verifying that the set \overchase{\T} is finite.
A sufficient condition for finiteness of \overchase{\T} is to only allow cyclic terms up to a certain depth in this set.
We use such condition to formally define \RCA{n}.

\begin{definition}
\label{definition:RCA}
A TBox \T \emph{is \RCA{n}} if and only if there are no $n$-cyclic terms in \overchase{\T}.
An ontology \TA is \RCA{n} if and only if \T is \RCA{n}.
\end{definition}

\begin{theorem}
\label{theorem:RCA}
If a TBox \T is \RCA{n} then the restricted chase of \T terminates.
\end{theorem}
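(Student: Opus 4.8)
The plan is to derive termination directly from Corollary~\ref{corollary1}, which already tells us that the restricted chase of $\T$ terminates whenever $\overchase{\T}$ is finite (this, in turn, rests on Lemma~\ref{lemma:VOverchase} together with Lemma~\ref{lemma:overchaseTermination}). So the whole argument collapses to a single implication: \emph{if $\T$ is $\RCA{n}$, i.e.\ $\overchase{\T}$ contains no $n$-cyclic term, then $\overchase{\T}$ is finite.}

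To establish this, I would first pin down which terms can occur in $\overchase{\T}$. The set is seeded with $\cI{\Rt}$, whose only constant is $\star$, and the only expansion rule in Figure~\ref{figure:overchaseRules} that creates a fresh term is the \eRule, which always introduces a term of the shape $\sF{y}{\rho}(x)\subs$; the \uRule copies facts over existing terms, the \aRule adds $\eP$-atoms over existing terms, and the \eqRule merely substitutes one already-present term for another. Hence every term in $\overchase{\T}$ is a ground term built from $\star$ and the Skolem function symbols $\sF{y}{\rho}$ of the existential axioms of $\Rt$. Since $\T$ is finite, $\Rt$ has only finitely many existential axioms, so there are only finitely many such function symbols --- say $k$ of them --- and, because \hsriq existential axioms have the form $A \sqsubseteq \exists R.B$, each of them is unary. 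Thus every term of $\overchase{\T}$ has the form $g_1(g_2(\cdots g_m(\star)\cdots))$ with each $g_i$ a Skolem symbol, and its (nesting) depth is $m$.

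Next I would bound $m$. Suppose $m > k\cdot n$. By the pigeonhole principle some Skolem symbol $f$ occurs at least $n+1$ times among $g_1, \dots, g_m$; writing $f(\vec{s_1}), \dots, f(\vec{s_{n+1}})$ for the subterms rooted at those occurrences, ordered by the subterm relation, each $f(\vec{s_i})$ is a \emph{proper} subterm of $f(\vec{s_{i+1}})$ and $f(\vec{s_{n+1}})$ is a subterm of the whole term --- exactly the sequence required by the definition of an $n$-cyclic term, contradicting the assumption that $\T$ is $\RCA{n}$. So every term in $\overchase{\T}$ has depth at most $k\cdot n$. Over the finite signature consisting of $\star$ and the $k$ Skolem symbols there are only finitely many ground terms of depth at most $k\cdot n$, and since $\Rt$ uses only finitely many predicates there are only finitely many facts over such terms. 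Therefore $\overchase{\T}$, being a subset of this finite collection of facts, is finite, and Corollary~\ref{corollary1} yields termination of the restricted chase of $\T$.

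The only step that needs genuine care is the depth bound: one must verify that the pigeonhole chain of equally-rooted subterms really meets the definition of $n$-cyclicity (consecutive members \emph{proper} subterms of one another, topmost one a subterm of $t$), and one must confirm that no rule in Figure~\ref{figure:overchaseRules} can introduce a new function symbol or constant, so that the signature underlying $\overchase{\T}$ stays finite. The remainder is routine counting against the finiteness of that signature.
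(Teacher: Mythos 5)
Your proof is correct and follows essentially the same route as the paper: the absence of $n$-cyclic terms bounds the depth of the terms in \overchase{\T} (which are built from $\star$ and the finitely many unary Skolem symbols), so \overchase{\T} is finite and Corollary~\ref{corollary1} yields termination of the restricted chase of \T. The only minor imprecision --- the \eqRule can in fact produce terms not previously present, by replacing $u$ nested inside a deeper term --- is harmless, since, as you yourself note, no expansion rule introduces a new function symbol or constant, which is all your counting argument needs.
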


We proceed with several results regarding the complexity of deciding \RCA{n} membership and reasoning over \RCA{n} ontologies.

\begin{theorem}
\label{lemma:RCAComplexityExp}
Deciding whether some TBox \T is \RCA{n} is in \ExpTime.
\end{theorem}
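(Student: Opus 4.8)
The plan is to show that the entire construction of \overchase{\T} can be carried out in exponential time, after which checking for $n$-cyclic terms is a trivial syntactic scan. First I would bound the set of terms that can appear anywhere in \overchase{\T}. By Definition \ref{definition:RCA}, if \T is \RCA{n} then \overchase{\T} contains no $n$-cyclic terms; but crucially, the construction in Definition \ref{definition:VOverchase} only ever introduces a new functional term via the \eRule, and such a term $\sF{y}{\rho}(x)\subs$ is added \emph{only if it is not restricted} — so as soon as a term would become $n$-cyclic (or, more to the point, as soon as the construction tries to go past depth roughly $n \cdot |\Rt|$), either the restrictedness check blocks it, or the process has already produced an $n$-cyclic term and we answer ``no''. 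Hence I would argue that one can run the construction with a depth cutoff: if any term of depth exceeding a fixed bound $d(n, |\T|)$ (polynomial in $n$ and $|\T|$) is ever generated, halt and report that \T is \emph{not} \RCA{n}; otherwise the construction stays within terms of depth at most $d$. The number of distinct terms of depth at most $d$ built from the finitely many Skolem functions of $\Rt$ and the single constant $\star$ is exponential in $d$ and hence exponential in the input, and the number of facts over the predicates of $\Rt$ (all of bounded arity) built from these terms is likewise exponential.

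Next I would account for the cost of the \eRule side-conditions, which is the only place a nontrivial subcomputation occurs: checking whether a candidate term $\sF{y}{\rho}(x)\subs$ is restricted requires, by Definition \ref{definition:restrictedTerm}, computing $\rChase{\RT{\T}{s}}$ where $s = x\subs$. The program $\RT{\T}{s}$ from Definition \ref{definition:restrictedProgram} contains only the universal TGDs and the EGDs of $\Rt$ — no existential rules — so its restricted chase is just a datalog-style closure (modulo equality renaming) over the finite instance $\I(s)$; since $\I(s)$ uses only subterms of $s$ and $s$ has depth at most $d$, this closure has at most exponentially many facts and is computed in exponential time. There are at most exponentially many such terms $s$ to ever consider, and each restrictedness test is itself exponential, so the total overhead is still exponential. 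I would also observe that whether a term is restricted depends only on the TBox and the term (not on the current state of \overchase{\T}), so these tests can be cached.

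Then I would bound the number of rule applications in the outer loop: each application of a \uRule, \eRule, \aRule, or \eqRule adds at least one new fact to \overchase{\T}, and the construction is monotone, so it terminates after at most exponentially many steps, each step requiring at most an exponential search for an applicable substitution plus (for the \eRule) an exponential restrictedness test. Multiplying exponential by exponential stays in \ExpTime. Finally, once \overchase{\T} is computed (or the depth cutoff has fired), deciding the presence of an $n$-cyclic term is a direct application of the definition of $n$-cyclic from the preliminaries: scan each term for a chain $f(\vec{s_1}), \ldots, f(\vec{s_{n+1}})$ of nested proper subterms headed by the same function symbol, which is polynomial in the (exponential) size of \overchase{\T}. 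The main obstacle I anticipate is making the depth-cutoff argument airtight: one must show that the construction cannot ``escape to infinity'' along terms that stay below the $n$-cyclicity threshold yet still grow without bound — i.e., that bounding the number of \emph{consecutive} same-function nestings by $n$ really does bound the overall term depth by something polynomial in $n$ and $|\Rt|$ (since between two occurrences of the same function only $|\Rt|$-many other functions can intervene). Establishing this quantitative bound cleanly, and confirming that the restrictedness check interacts correctly with it, is where the real work lies; the rest is routine bookkeeping of exponential-in, exponential-out closures.
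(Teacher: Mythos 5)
Your proposal is correct and follows essentially the same route as the paper's argument: saturate \overchase{\T} with a depth cutoff, observing that each restrictedness test only requires chasing the existential-free program \RT{\T}{s} over the (few) subterms of $s$, and finally scan the exponentially bounded set for $n$-cyclic terms. The one obstacle you flag as ``the real work'' is in fact immediate: $n$-cyclicity only requires $n{+}1$ nested (proper-subterm, not necessarily consecutive) occurrences of the same function symbol, so by pigeonhole any term of depth greater than $n$ times the number of Skolem functions of \Rt is already $n$-cyclic, which gives the polynomial depth bound, hence the exponential bound on the size of \overchase{\T}, directly.
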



\begin{theorem}
\label{lemma:cqComplexity}
Let $\O = \TA$ be some \RCA{n} ontology and \query a query.
Then, checking whether $\O \models \query$ is \ExpTimeC.
\end{theorem}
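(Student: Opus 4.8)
The plan is to establish the upper and lower bounds separately. For the \ExpTime membership, I would proceed as follows. By Theorem~\ref{theorem:RCA}, since \O is \RCA{n}, the restricted chase of \Po terminates, and by the discussion following Definition~\ref{definition:chase} we may decide $\O \models \query$ by computing $\rChase{\Po}$ and checking whether $\rChase{\Po} \models \exists y \bot(y)$ or $\rChase{\Po} \models \query$. So the key quantity to bound is the size of $\rChase{\Po}$. First I would argue that every term occurring in $\rChase{\Po}$ is a subterm of some term in \overchase{\T} up to renaming of the constant $\star$ by ABox individuals — this is essentially the content of Lemma~\ref{lemma:VOverchase}, which tells us \overchase{\T} is an overchase, i.e.\ $\rChase{\Po}_\star \subseteq \overchase{\T}$. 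Since \T is \RCA{n}, \overchase{\T} contains no $n$-cyclic terms; combined with the bound on arity and the fact that the depth-nondecreasing order \sAO prevents deeper terms from being introduced by EGD application, this bounds the depth of terms, hence the number of distinct terms (over the fixed signature of \Rt and the individuals in \A), by an exponential in $\Vert\T\Vert$ and polynomial in $\Vert\A\Vert$. The number of facts over these terms is then also at most exponential, and the chase is computed in time polynomial in its own size, so the whole procedure runs in \ExpTime; query evaluation over an exponential-size set of facts is likewise in \ExpTime.

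For the \ExpTimeH lower bound, I would reduce from a problem already known to be \ExpTimeC, namely satisfiability (equivalently, instance checking or CQ entailment) over \hsriq or even over the weaker logics \ensuremath{\mathcal{ELH}} / \Horn\text{-}\ensuremath{\mathcal{SHIQ}}, whose data or combined complexity is \ExpTimeH (see \cite{KRH:HornDLs2013,KRH:HornDLs2013}). The crucial observation is that this hardness must be witnessed already on ontologies whose TBox is \RCA{n} for the relevant $n$: I would take a known family of \ExpTime-hard instances and either observe that the TBoxes involved are trivially acyclic (e.g.\ the standard \ensuremath{\mathcal{ELH}} reductions use no existential axioms at all, so \overchase{\T} is finite and \T is $\RCA{0}$, hence \RCA{n} for every $n$), or, if existentials are needed, pad/modify the reduction so that the resulting TBoxes provably generate no $n$-cyclic terms in \overchase{\T}. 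Since \RCA{n} ontologies are closed under adding an ABox and the query, $\O \models \query$ for \RCA{n} \O is then \ExpTimeH.

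The main obstacle I anticipate is the upper bound, specifically making precise that the restricted chase stays within an exponential bound. The delicate point is that \RCA{n} is defined via the \emph{overchase} \overchase{\T}, an over-approximation built with a non-deterministic expansion that relies on the ``restricted term'' test (Definition~\ref{definition:restrictedTerm}), which itself invokes restricted chases of the auxiliary programs \RT{\T}{s}. I would need to confirm that (a) \overchase{\T} being free of $n$-cyclic terms genuinely bounds term depth — this requires the pigeonhole-style argument that a term of depth exceeding $n$ times the number of function symbols must be $n$-cyclic — and (b) the exponential bound on \overchase{\T} transfers to $\rChase{\Po}$ via Lemma~\ref{lemma:VOverchase} even though ABox individuals replace the single constant $\star$; the number of ground instances only multiplies by a polynomial in $\Vert\A\Vert$ per term position, keeping the total exponential in combined size. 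Once the size bound is secured, everything else (termination via Theorem~\ref{theorem:RCA}, soundness/completeness of chase-based query answering, and the routine lower-bound reduction) is standard.
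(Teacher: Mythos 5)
Your upper-bound argument is essentially sound and follows the intended route: by Lemma~\ref{lemma:VOverchase} every fact of $\rChase{\Po}$ appears (after renaming constants to $\star$) in \overchase{\T}, absence of $n$-cyclic terms bounds the nesting of each (unary) Skolem function along a branch by $n$, hence term depth is linear in $n\cdot\vert\eA{\T}\vert$, the number of distinct terms is exponential in $\Vert\T\Vert$ and polynomial in $\Vert\A\Vert$, and computing the chase plus evaluating \query over it stays in \ExpTime. The genuine gap is in the lower bound. Your primary route --- reusing known hardness proofs for Horn DLs on the grounds that they ``use no existential axioms at all'' and are therefore trivially \RCA{0} --- does not work: without existential axioms the rules produced by $\Pi$ in Figure~\ref{figure:Pi} are plain datalog rules with at most three variables each, so the grounding is polynomial and entailment is decidable in polynomial time; no \ExpTimeH problem lives there. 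Conversely, the standard \ExpTime-hardness constructions for \hshiq-like logics (simulating polynomially space-bounded alternating machines) rely on existential axioms that unravel into chases of exponential depth, which necessarily contain $n$-cyclic terms for the values of $n$ at issue; such TBoxes are precisely \emph{not} \RCA{n}, so they cannot be imported unchanged.

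This means the hardness proof cannot be dismissed as ``routine'': one must exhibit an \ExpTime-hard problem whose instances translate into TBoxes that are provably \RCA{n}, i.e.\ whose overchase is free of $n$-cyclic terms while still being exponentially large. The phrase ``pad/modify the reduction so that the resulting TBoxes provably generate no $n$-cyclic terms'' is exactly the missing content, since the whole difficulty is to extract \ExpTime-hardness from a chase of only \emph{polynomial depth}: one has to exploit the exponentially many pairwise distinct non-cyclic terms (e.g.\ compositions of distinct function symbols encoding a binary counter) and use the datalog part of \hsriq --- role inclusions, role chains, universal and at-most-one axioms --- to order and traverse them so as to simulate an exponential-time computation, or alternatively reduce from an existing \ExpTime-hardness result for CQ entailment over acyclic, bounded-arity existential rules and show that the rules used there are expressible as an \RCA{n} \hsriq TBox. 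Without such a construction (or citation plus an argument that the witnessing TBoxes are \RCA{n}), the claimed \ExpTimeH half of the theorem is unsupported.
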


To close the section, we present several results in which we theoretically compare the generality of \RCA{n} to \MFAC.

\begin{theorem}
\MFAC does not cover \RCA{1}.
\end{theorem}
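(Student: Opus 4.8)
The plan is to produce a single TBox \T that witnesses the separation: one that is \RCA{1} but not \MFAC. The natural candidate is the TBox \T of Example~\ref{mainExample}, whose restricted chase terminates while its oblivious chase does not --- exactly the gap that the acyclicity notion \RCA{n} is designed to exploit.

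For \RCA{1}, I would simply invoke the computation of \overchase{\T} carried out after Corollary~\ref{corollary1}. That set is finite, and the only terms occurring in it are $\star$, $\sF{y}{\rho}(\star)$ and $\sF{y}{\auxRule}(\star)$; the depth-$2$ terms $\sF{y}{\rho}(\sF{y}{\auxRule}(\star))$ and $\sF{y}{\auxRule}(\sF{y}{\rho}(\star))$ are restricted and are therefore never added by the \eRule. None of the three terms nests two occurrences of the same function symbol, hence \overchase{\T} contains no $1$-cyclic term, and \T is \RCA{1} by Definition~\ref{definition:RCA}.

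For the failure of \MFAC, I would argue that $\bigcup_{\Sigma \in \textit{Sing}(\Rt)} \Sigma$ is not \MFA. First, since \Rt contains no EGDs and every body variable of every rule of \Rt occurs exactly once in its body, each rule of \Rt has a single singularization (obtained by merely appending the trivial atoms $\eP(v,v)$), so this union is a single rule set $\Sigma'$, namely \equalitySet together with (mild variants of) the rules of \Rt. It then remains to exhibit a cyclic term in the oblivious chase of $\langle \Sigma', \cI{\Sigma'} \rangle$. Since $\Film(\star)$ and $\eP(\star,\star)$ lie in $\cI{\Sigma'}$, the singularization of $\rho$ fires and creates $\sF{y}{\rho}(\star)$ together with $\Producer(\sF{y}{\rho}(\star))$; the $\top$-axiomatization rules then yield $\top(\sF{y}{\rho}(\star))$ and the reflexivity rule of \equalitySet yields $\eP(\sF{y}{\rho}(\star),\sF{y}{\rho}(\star))$; this reflexivity atom lets the singularization of \auxRule fire on $\Producer(\sF{y}{\rho}(\star))$, creating $\Film(\sF{y}{\auxRule}(\sF{y}{\rho}(\star)))$; and one further application of (the singularization of) $\rho$ produces $\sF{y}{\rho}(\sF{y}{\auxRule}(\sF{y}{\rho}(\star)))$, a term that nests two occurrences of $\sF{y}{\rho}$ and is hence cyclic. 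Therefore $\Sigma'$ is not \MFA, so \T is not \MFAC, and together with the previous paragraph this proves the theorem.

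The step I expect to be the real obstacle is the \MFAC direction, and within it the bookkeeping around equality: one has to check (a) that singularization does not fork any rule of \Rt, so that the union over $\textit{Sing}(\Rt)$ collapses to the single $\Sigma'$ above, and (b) that the reflexivity facts $\eP(t,t)$ needed to keep re-firing the singularized existential rules on the freshly created nulls $t$ are indeed derivable --- which is precisely where the $\top$-axiomatization rules (propagating $\top$ to every term) together with the reflexivity rule of \equalitySet are used; if one instead treats the $\top$-rules so that $\top$ fails to reach fresh nulls, the same separation can be recovered by replacing the qualifying concepts of the two existential axioms of \T by $\top$. Everything else, including the whole \RCA{1} part, is just unwinding the relevant definitions and reusing computations already present in the paper.
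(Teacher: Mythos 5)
Your proposal is correct and follows exactly the paper's route: the paper's proof is the one-liner that the TBox \T of Example~\ref{mainExample} is \RCA{1} but not \MFAC, and you use the same witness, merely spelling out the verification (finiteness and cycle-freeness of \overchase{\T} on one side, a cyclic term in the oblivious chase of the singularized critical-instance program on the other). The details you supply, including the observation that singularization does not fork the rules of \Rt and that reflexivity facts reach the fresh nulls via the $\top$-axiomatization, are sound.
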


\begin{proof}
The TBox \T from Example \ref{mainExample} is \RCA{1}  but not \MFAC.
\end{proof}

\begin{theorem}
\label{lemma:MFACImpliesRCA}
If \T is \MFAC then \T is \RCA{n} for every $n > \vert \eA{\T} \vert$ where \eA{\T} is the set of all existential axioms in \T.
\end{theorem}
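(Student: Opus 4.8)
The plan is to show that, under the hypothesis, every term occurring in $\overchase{\T}$ has depth at most $|\eA{\T}|$; since an $n$-cyclic term necessarily has depth at least $n+1$, this immediately yields that $\overchase{\T}$ contains no $n$-cyclic term whenever $n \geq |\eA{\T}|$, and in particular whenever $n > |\eA{\T}|$, i.e.\ that \T is \RCA{n}.

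Let $\Sigma^{*} = \bigcup_{\Sigma' \in \textit{Sing}(\Rt)} \Sigma'$ be the rule set witnessing \MFAC{} (reading EGDs as equality TGDs, as usual), so that by hypothesis the oblivious chase of $(\Sigma^{*}, \cI{\Sigma^{*}})$ contains no cyclic term. The first observation is that the only function symbols of $\Sigma^{*}$ are the Skolem functions of its singularized existential rules, and that for a \hsriq{} TBox each existential axiom $A \sqsubseteq \exists R.B$ yields exactly one such rule: its body $A(x)$ contains a single occurrence of the frontier variable, so its singularization is unique (up to a trivial $\eP(x, x)$ atom). Hence $\Sigma^{*}$ has exactly $|\eA{\T}|$ function symbols. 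Since the oblivious chase of $(\Sigma^{*}, \cI{\Sigma^{*}})$ has no cyclic term, no function symbol is ever nested inside an occurrence of itself, so along any root-to-leaf path of any term of that chase each function symbol occurs at most once, and therefore every such term has depth at most $|\eA{\T}|$.

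The core step is to construct a depth-preserving homomorphism $h$ from $\overchase{\T}$ into the oblivious chase of $(\Sigma^{*}, \cI{\Sigma^{*}})$ with $h(\star) = \star$ and $h(\sF{y}{\rho}(s)) = \sF{y}{\rho'}(h(s))$, where $\rho'$ denotes the singularization of $\rho$. I would prove its existence by induction on the construction of $\overchase{\T}$ in Definition \ref{definition:VOverchase}. The seed $\cI{\Rt}$ embeds into $\cI{\Sigma^{*}}$, which is contained in the chase. An application of the \uRule{} or the \aRule{} for a rule of $\Rt$ is matched by an application of the corresponding singularized rule of $\Sigma^{*}$: the fresh equality atoms added by singularization all become atoms of the form $\eP(w, w)$ under $h$, which are available through the reflexivity rule $\top(x) \to \eP(x, x)$ of $\textsf{Eq} \subseteq \Sigma^{*}$ together with the fact that $\top$ holds of every term in the critical chase. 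An application of the \eqRule{} is matched by the equality-congruence rules of $\textsf{Eq}$ (together with symmetry of equality). Finally, an application of the \eRule{} for $\rho = A(x) \to \exists y [R(x, y) \wedge B(y)]$ on a term $t$ is matched by the singularized existential rule firing on $h(t)$; here the restriction side-condition (ii) of the \eRule{} is irrelevant, since the oblivious chase applies existential rules unconditionally, while $A(h(t))$ is present because $h$ is a homomorphism so far. Because $h$ only renames function symbols and never alters a term's nesting structure, $\depth(t) = \depth(h(t)) \leq |\eA{\T}|$ for every term $t$ of $\overchase{\T}$, which concludes the argument.

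The part I expect to require most care is the inductive step for the \uRule{} and the \aRule: one has to track precisely how a single application of a rule $\rho \in \Rt$ whose body contains a variable with several occurrences is simulated by the singularized version in $\Sigma^{*}$ --- renaming those occurrences to distinct fresh variables that all receive the same $h$-image, and discharging the induced reflexive equality atoms --- and one has to check that the one-directional, non-transitive \eqRule{} of Figure \ref{figure:overchaseRules} is always subsumed by the full equality congruence available in the oblivious chase of $(\Sigma^{*}, \cI{\Sigma^{*}})$. The remaining ingredients (the count of Skolem functions, the resulting depth bound, and the reduction of ``no $n$-cyclic term'' to that bound) are routine.
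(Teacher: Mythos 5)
Your reduction of the claim to a depth bound on the terms of \overchase{\T}, and your count of the Skolem functions of $\Sigma^{*}=\bigcup_{\Sigma'\in\textit{Sing}(\Rt)}\Sigma'$, are fine; the problem is exactly the step you flag as delicate, and it does not go through. By the paper's definition of singularization, $\Sigma^{*}$ contains, besides the singularized rules, only the set \equalitySet of reflexivity, symmetry and transitivity rules for \eP; it contains \emph{no} congruence/replacement rules --- avoiding those is the very purpose of singularization, since adding them is what makes the plain axiomatization of equality fail \MFA (Example~\ref{ex:mfaeq}). Consequently the \eqRule of Figure~\ref{figure:overchaseRules}, which \emph{rewrites} an atom by substituting a term $u$ with an \eP-related term $t$, has no counterpart in the oblivious chase of $(\Sigma^{*},\cI{\Sigma^{*}})$: nothing in $\Sigma^{*}$ ever derives $p(u_1,\ldots,u_n)[u/t]$ from $p(u_1,\ldots,u_n)$ and $\eP(t,u)$. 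So your invariant ``the $h$-image of \overchase{\T} is contained in the singularized critical chase'' is not preserved by \eqRule applications, and the depth-preserving homomorphism on which the whole argument rests need not exist; the same defect propagates to later \uRule, \aRule and \eRule steps whose bodies match atoms created by the \eqRule.

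This is not a cosmetic slip. If an atom-wise, depth-preserving embedding of \overchase{\T} into the singularized critical chase did exist, you would have shown that \overchase{\T} contains no cyclic terms at all, i.e.\ that \MFAC implies \RCA{1} --- a considerably stronger statement than the theorem, and one the paper conspicuously does not assert (it only reports the empirical observation that no ontology in its corpus is \MFAC but not \RCA{1}, and states the bound $n>\vert\eA{\T}\vert$). The way a singularized chase actually simulates equality reasoning is only ``modulo \eP-chains'': rule bodies are matched up to \eP-connected terms. Under that weaker invariant, when the \eRule fires on $A(s)$ in \overchase{\T}, the singularized existential rule fires in the critical chase on some term $v$ that is merely \eP-equivalent to $h(s)$, producing $\sF{y}{\rho'}(v)$ rather than $\sF{y}{\rho'}(h(s))$; nesting structure, hence cyclicity and depth, is no longer transferred one-to-one, so acyclicity of the \MFA chase does not directly bound the depth of terms of \overchase{\T}. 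Accounting for this slack is precisely where the allowance $n>\vert\eA{\T}\vert$ has to be earned, and your sketch does not address it.
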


\section{Evaluation}
\label{section:evaluation}

\subsection{An Empirical Comparison of \RCA{n} and \MFAC}

In this section we include an empirical comparison of the generality of \RCA{n} and \MFAC.
For our experiments, we use the TBoxes of the ontologies in the OWL Reasoner Evaluation workshop (ORE, \url{https://www.w3.org/community/owled/ore-2015-workshop/}) and Ontology Design Patterns (ODP, \url{http://www.ontologydesignpatterns.org}) datasets.
The former is a large repository used in the ORE competition containing a large corpus of ontologies.
The latter contains a wide range of smaller ontologies that capture design patterns commonly used in ontology modeling.
The ORE dataset is rather large, and thus we restrict our experiments to the 294 ontologies with the smallest number of existential axioms, while skipping the 77 ontologies with the largest number of existential axioms. The number of such axioms contained in an ontology is a useful metric to predict the ``hardness'' of acyclicity membership tests; i.e. running these experiments would be very time-intensive, while our results, reported below, already indicate that for such very hard TBoxes \MFAC and \RCA{n} will likely not differ much (while they differ significantly for ontologies with a lower count of existential axioms).

Only \hsriq TBoxes which cannot be expressed in any of the OWL 2 profiles were considered in our experiments.
This is because all OWL 2 RL TBoxes are acyclic (with respect to every applicable acyclicity notion known to us), and there already exist effective algorithms and efficient implementations that solve CQ answering over OWL 2 EL and OWL 2 QL ontologies \cite{DBLP:conf/kr/KontchakovLTWZ10,DBLP:conf/aaai/StefanoniMH13,DBLP:journals/jair/StefanoniMKR14} (albeit, if these do not include complex roles).

The results from our experiments are summarized in Figure \ref{figure:RCAvsMFA}.
The evaluated TBoxes are sorted into brackets depending on the number of existential axioms they contain.
For each bracket we provide the average number of axioms in the ontologies (``Avg. Size''), the number of ontologies (``Count''), and, for every condition ``X'' considered, the percentage of ``X acyclic'' ontologies

\RCA{2} and \RCA{3} turned out to be indistinguishable with respect to the TBoxes considered and thus, we limit our evaluation to \RCA{n} with $n \leq 3$.
Our tests reveal that \RCA{2} is significantly more general than \MFAC, particularly when it comes to TBoxes with a low count of existential axioms. However note that reasoning over ontologies with few (existential) axioms is in general not trivial: All of the ontologies considered in our materialization tests (see Figure \ref{figure:reasoningEvaluation}) contain less than 20 existential axioms. 
For TBoxes containing from 1 to 10 existential axioms in the ORE dataset, more than half of the ontologies which are not \MFAC are \RCA{2}.
Furthermore, the 4 ontologies in the ODP dataset which are not \MFAC are \RCA{2}.
Interestingly, in both repositories we could not find any ontology that is \MFAC but not \RCA{1}.
Thus, with respect to the TBoxes in our corpus, \RCA{1} already proves to be more general than \MFAC.

\begin{figure*}[t]
\centering
\begin{tabu} to \linewidth {X[0.3,l] X[1,l] X[1,l] X[1,l] X[1,l] X[1,l] X[1,l] X[1,l] X[0.3,l]}
&ORE \\ \cline{2-8}
&$\exists$-Axioms	& Avg. Size 	& Count	& \MFAC	& \RCA{1}	& \RCA{2}	& \RCA{3}	&\\ \cline{2-8}
&1-5				& 175		& 70		& 70.0	& 87.1	& 92.9	& 92.9	&\\
&6-10			& 219		& 48		& 58.3	& 83.3	& 83.3	& 83.3	&\\
&11-25			& 916		& 54		& 83.3	& 85.2	& 91	& 91	&\\
&26-100			& 521		& 42		& 54.8	& 59.5	& 61.9	& 61.9	&\\
&101-500			& 1290		& 42		& 26.2	& 26.2	& 28.6	& 28.6	&\\
&501-1922		& 5052		& 38		& 60.5	& 60.5	& 60.5	& 60.5	&\\ \cline{2-8}
&1-1922			&1362		& 294	& 60.9	& 70.1	& 73.1	& 73.1	&\\ \\
&ODP \\ \cline{2-8}
&$\exists$-Axioms	& Size		& Total	& \MFAC	& \RCA{1}	& \RCA{2}	& \RCA{3}	&\\ \cline{2-8}
&1-12			& 39			& 18		& 73.7	& 100.0	& 100.0	& 100.0
\end{tabu}
\caption{Results for the ORE and ODP Repositories.}
\label{figure:RCAvsMFA}
\end{figure*} 

In total, we looked at 312 ontologies, $62\%$ and $75\%$ of which are \MFAC and \RCA{2}, respectively.
To gauge the significance of this improvement, we roughly compare these numbers with the results presented in \cite{CG+13:acyclicity}.
In that paper, the authors consider a total of 336 ontologies, of which $49\%$, $58\%$ and $68\%$ are \emph{weakly acyclic} \cite{DBLP:journals/tcs/FaginKMP05}, \emph{jointly acyclic} \cite{DBLP:conf/ijcai/KrotzschR11} and \MFAC, respectively.
Even though the comparison is not over the same TBoxes, we verify that the improvement in generality of our notion is in line with previous iterations of related work.

\subsection{A Materialization Based Reasoner}
\label{section:materialization}

We now report on an implementation of the restricted chase as defined in Section~\ref{section:reasoningChase}.
Moreover, we also present an implementation of the oblivious chase with singularization, i.e., the chase as it must be used if we employ \MFAC (see Section~\ref{section:acyclicityNotions}).
We use the datalog engine RDF\-Ox \cite{DBLP:conf/semweb/NenovPMHWB15} in both implementations.

We evaluate the performance of our chase based implementations against Konclude \cite{DBLP:journals/ws/SteigmillerLG14}, a very efficient OWL DL reasoner, and PAGOdA \cite{DBLP:journals/jair/ZhouGNKH15}, a hybrid approach to query answering over ontologies.
PAGOdA combines a datalog reasoner with a fully-fledged OWL 2 reasoner in order to provide scalable 'pay-as-you-go' performance and is, to the best of our knowledge, the only other implementation that may solve CQ answering over \hsriq ontologies with completeness guarantees, albeit only in some cases.
Nevertheless, PAGOdA was able to solve all the queries (that is, all of which for which it did not time-out or run out of memory) in this evaluation in a sound and complete manner.



We consider two real-world ontologies in our experiments, Reactome and Uniprot, and two standard benchmarks, LUBM and UOBM, all of which contain a large amount of ABox axioms.
Axioms in these ontologies which are not expressible in \hsriq were pruned.
Furthermore, one extra axiom had to be removed from Uniprot for it to be both \MFAC and \RCA{1} acyclic.

\begin{figure*}[t]
\centering
\begin{tabular}{| r | r | rrrr  | r | rrrr | r | rrrr | r |}
\hline
Triples~	& \multicolumn{5}{c|}{Restricted}	& \multicolumn{5}{c|}{Oblivious}		& \multicolumn{5}{c|}{PAGOdA}		& \multicolumn{1}{c|}{Konc.}	\\ 
\cline{2-17}
Count~	& \multicolumn{1}{c|}{C} &\multicolumn{4}{c|}{Q1-Q4} & \multicolumn{1}{c|}{C} &\multicolumn{4}{c|}{Q1-Q4} & \multicolumn{1}{c|}{P} &\multicolumn{4}{c|}{Q1-Q4} & \multicolumn{1}{c|}{R}	\\ \hline
~$2.8M$	~&~ 10	~&~ 0 & 0 & 0 & 0 		~&~ 45	~&~ 0 & 0 & TO & 0	~&~ 89	~&~ OM & 4 & 1 & 0		~&~ 75 ~\\
~$5.1M$	~&~ 21	~&~ 0 & 0 & 0 & 0		~&~ 138	~&~ 0 & 0 &TO & 3	~&~ 147	~&~ OM & 1 & 2 & 0		~&~ 214 ~\\
~$6.7M$	~&~ 28	~&~ 0 & 0 & 0 & 0		~&~ 1029	~&~ 2 & 0 & TO & 0	~&~ 203	~&~ OM & 2 & 3 & 1		~&~ 506 ~\\
~$8.1M$	~&~ 36	~&~ 37 & 0 & 0 & 0		~&~ TO	~&~ - & - & - & -	~&~ 263	~&~ OM & 2 & 2 & 6		~&~ 1347 ~\\ \hline
~$9.0M$	~&~ 37	~&~ 0 & 0 & 0 & 0		~&~ OM	~&~ - & - & - & - 	~&~ 113	~&~ 1 & 1 & 1 & 1		~&~ 198 ~\\	
~$17.8M$	~&~ 72	~&~ 0 & 0 & 0 & 0		~&~ OM	~&~ - & - & - & - 	~&~ 232	~&~ 2 & 2 & 3 & 3		~&~ 987 ~\\
~$26.2M$	~&~ 107	~&~ 0 & 0 & 0 & 0		~&~ OM	~&~ - & - & - & - 	~&~ 378	~&~ 4 & 10 & 12 & 5		~&~ 3491 ~\\
~$33.9M$	~&~ 141	~&~ 0 & 1 & 0 & 0		~&~ OM	~&~ - & - & - & - 	~&~ 521	~&~ 6 & 21 & 21 & 12	~&~ TO ~\\ \hline
~$2.8M$	~&~ 8	~&~ 0 & 0 & 0 & 1		~&~ 70	~&~ 0 & 0 & 0 & 74	~&~ 51	~&~ OM & 0 & 0 & 0		~&~ 51 ~\\
~$5.7M$	~&~ 16	~&~ 0 & 0 & 0 & 2		~&~ 158	~&~ 1 & 1 & 1 & 154	~&~ 99	~&~ OM & 1 & 1 & 0		~&~ 118 ~\\
~$8.4M$	~&~ 26	~&~ 0 & 0 & 0 & 3		~&~ 242	~&~ 1 & 1 & 2 &186	~&~ 142	~&~ OM & 2 & 1 & 1		~&~ 220 ~\\
~$11.4M$	~&~ 37	~&~ 1 & 0 & 0 & 5		~&~ 341	~&~ 2 & 2 & 3 & 311	~&~ 197	~&~ OM & 3 & 1 & 1		~&~ 315 ~\\ \hline
~$2.2M$	~&~ 11	~&~ 0 & 0 & 0 & 0		~&~ 56	~&~ 0 & 0 & 0 & 1	~&~ 61	~&~ 28 & 0 & TO & 1	~&~ 53 ~\\
~$4.5M$	~&~ 27	~&~ 2 & 0 & 0 & 0		~&~ 133	~&~ 0 & 0 & 1 & 2	~&~ 121	~&~ 60 & 0 & TO & 2	~&~ 125 ~\\
~$6.6M$	~&~ 42	~&~ 3 & 1 & 1 & 0		~&~ 216	~&~ 1 & 1 & 2 & 3	~&~ 186	~&~ TO & 0  & TO & 5 	~&~ 292 ~\\
~$8.9M$	~&~ 58	~&~ 5 & 1 & 2 & 1		~&~ 310	~&~ 1 & 2 & 4 & 6	~&~ 260	~&~ TO & 0 & TO & 5	~&~ 644 ~\\ \hline
\end{tabular}
\caption{Results for Reactome, Uniprot, LUBM and UOBM (sorted from top to bottom in the above table).
} 
\label{figure:reasoningEvaluation}
\end{figure*}

The results from our experiments are summarized in Figure \ref{figure:reasoningEvaluation}.
For each ontology, we consider four samples of the original ABox.
The number of triples contained in each one of these is indicated at the beginning of each row, under the column ``Triples Count''.
As previously mentioned, we consider four different implementations: These include the two aforementioned variants of the chase (``Restricted'' and ``Oblivious''), PAGOdA (``PAGOdA'') and Konclude (``Konc.'').
For both chase based implementations, we check the time it takes to  compute the chase (``C'') and then the time to solve each of the four queries crafted for each ontology (``Q1-Q4'').
In a similar manner, we list the time PAGOdA takes to preprocess each ontology (``P'') plus the time it takes to answer the queries (``Q1-Q4'').
Finally, we list the time Konclude takes to solve realization; i.e., the task of computing every fact of the form $A(a)$ entailed by an ontology (note that Konclude cannot solve arbitrary CQ answering).
Time-outs, indicated with ``TO,'' were set at 1 hour for materialization and 5 minutes for queries.
We make use of the acronym ``OM'' to indicate that an out-of-memory error occurred.
Sometimes, a time-out or an out of memory error prevents us from answering the queries: Such a situation is indicated with ``-.''
All experiments were performed on a MacBook Pro with 8GB of RAM and a 2.4 GHz Intel Core i5 processor.

For each ontology, we consider four different queries which are listed in the App. Section \ref{appendix:queries} included in the extended technical report.
A summarized description of these queries, in which we ignore unary predicates, can be found in Figure \ref{figure:queries}.
For every ontology, the query Q1 is of the form $\exists x, y, z R(x, y) \wedge R(z, y)$ where $R$ is an existentially quantified role occurring in the TBox.
It appears that PAGOdA has trouble with this kind of query, whereas the chase based implementations efficiently solve it in all but one case.
This is probably due to the design of the hybrid reasoner which considers under and over approximations to provide complete answers to CQ: It appears that queries as the one previously considered find a large number of matches in the upper bound which slows down the performance of this reasoner.
Queries Q2, and Q3 and Q4 are acyclic and cyclic, respectively (a query is acyclic if the shape of its body is acyclic).
Even though it is well-known that answering acyclic CQs can be reduced to satisfiability \cite{DBLP:conf/esws/MartinezH12}, we included such a type of query in our evaluation in an attempt to verify whether solving acyclic queries is simpler than cyclic queries (this is indeed the case theoretically).
Nevertheless, our experiments do not reveal any significant differences.

\begin{figure}[t]
\scriptsize
\begin{align*}
\textsf{q}_1(w, y) : \text{ } & \textsf{pE}(w, z), \textsf{pE}(y, z) 								& \textsf{q}_1(x, y) : \text{ } &\textsf{cC}(x, z), \textsf{cC}(y, z) \\
\textsf{q}_2(x, z) : \text{ } & \textsf{mPE}(z, w), \textsf{mPE}(z, w), \textsf{p}(y, z), \textsf{pC}(x, y)	& \textsf{q}_2(x) : \text{ } &\textsf{tF}(w, x), \textsf{lO}(x, y), \textsf{d}(x, z) \\
\textsf{q}_3(x, z) : \text{ } & \textsf{fL}(x, w), \textsf{fL}(x, y), \textsf{sIB}(w, z), \textsf{sIB}(y, z)		& \textsf{q}_3(x) : \text{ } &\textsf{tF}(w, y), \textsf{tF}(w, x), \textsf{d}(y, z), \textsf{d}(x, z) \\
\textsf{q}_4(x, z) : \text{ } &\textsf{p}(w, z), \textsf{p}(y, z), \textsf{pC}(x, w), \textsf{pC}(x, y) 		& \textsf{q}_4(x) : \text{ } &\textsf{lI}(x, w), \textsf{cC}(w, z), \textsf{lI}(x, y), \textsf{cC}(y, z) \\ \\
\textsf{q}_1(x, z) : \text{ } & \textsf{wF}(x, y), \textsf{wF}(z, y), \textsf{pA}(x, z)					& \textsf{q}_1(x, y) : \text{ } &\textsf{tC}(x, z), \textsf{tC}(y, z) \\
\textsf{q}_2(x) : \text{ } & \textsf{a}(x, y), \textsf{tO}(y, z), \textsf{mO}(y, w)						& \textsf{q}_2(x) : \text{ } &\textsf{tAO}(x, y), \textsf{pA}(z,x), \textsf{tC}(w, y), \textsf{wF}(x, v)\} \\
\textsf{q}_3(x, z) : \text{ } & \textsf{tO}(y, z), \textsf{a}(x, y), \textsf{tC}(x, z)						& \textsf{q}_3(x, y) : \text{ } & \textsf{iFO}(x, y), \textsf{l}(x, z) \\
\textsf{q}_4(x) : \text{ } &\textsf{pA}(x, z), \textsf{pA}(x, y), \textsf{a}(z, y),						& \textsf{q}_4(x, y) : \text{ } & \textsf{hDDF}(x, z), \textsf{hDDF}(y, z), \textsf{hMDF}(x, w), \\ 
&\textsf{mO}(z, w), \textsf{mO}(y, w)													& & \textsf{hMDF}(y, w), \textsf{wF}(x, v), \textsf{wF}(y, v)
\end{align*}
\caption{Summarized queries for Reactome (top left), Uniprot (top right), LUBM (bottom left) and UOBM (bottom right).}
\label{figure:queries}
\end{figure}

First, note that computing the restricted chase employing renaming techniques to deal with equality is way more efficient than computing the oblivious chase with singularization.
We conjecture that this is because the efficient built-in capabilities of RDF\-Ox to deal with equality and the fact that the rules that result from the application of singularization are rather cumbersome.
Second, see that our proposed algorithm is also superior to PAGOdA when it comes to CQ answering.
Third, the implementation of the restricted chase outperforms the DL reasoner Konclude  by an order of magnitude when it comes to solve materialization of the larger samples considered (note that, by computing the chase of a program we already solve materialization).
It is clear that our implementation also scales much better than the OWL DL reasoner.

\section{Conclusions and Future Work}
\label{conclusions}

We introduce a novel acyclicity notion for \hsriq TBoxes and prove it to be, theoretically and empirically, more general than previously existing conditions \cite{CG+13:acyclicity}.
To the best our knowledge, this is the first acyclicity notion (for ontologies or rules) which considers termination of the restricted chase algorithm.
Moreover, our contribution is also relevant in practice: Based on our ideas, we produce an implementation which vastly outperforms state-of-the-art reasoners.

As future work, we plan to lift our acyclicity condition to the case of general rules; i.e., not only those resulting from the translation of \hsriq TBoxes.
We also intend to work on further optimizing our implementation of the \RCA{n} membership check and our restricted chase based algorithm.

\medskip
\noindent 
\textbf{Acknowledgements. } 
We wish to thank Bernardo Cuenca Grau for extensive discussions on the subject and valuable feedback.
This work was supported by the National Science Foundation under awards 1017255 \emph{III: Small: TROn -- Tractable Reasoning with Ontologies} and 1440202 \emph{EarthCube Building Blocks: Collaborative Proposal: GeoLink -- Leveraging Semantics and Linked Data for Data Sharing and Discovery in the Geosciences}; the \emph{ERC grant 647289} and the \emph{European Research Council grant CODA 647289}.

\bibliographystyle{splncs03}
\bibliography{reference}



\end{document}